\documentclass[11pt]{article}

\usepackage[a4paper,margin=1in]{geometry}

\usepackage[T1]{fontenc}
\usepackage[utf8]{inputenc}
\usepackage{lmodern}
\usepackage{abstract} 
\usepackage{array}
\usepackage{amsmath}
\usepackage{amssymb}
\usepackage{amsthm}
\usepackage{mathtools}
\usepackage{mathrsfs}

\usepackage{enumitem}
\usepackage{booktabs}
\usepackage{multirow}
\usepackage{float}
\usepackage{pifont}
\usepackage{xcolor}

\usepackage{url}
\usepackage{hyperref}

\hypersetup{
	colorlinks=true,
	linkcolor=blue,
	citecolor=blue,
	urlcolor=blue
}

\newcommand{\Prop}{\mathsf{Prop}}       

\newcommand{\cmark}{\ensuremath{\checkmark}}
\newcommand{\xmark}{\ensuremath{\times}}

\newcommand{\CL}{\mathsf{CL}}           
\newcommand{\ATL}{\mathsf{ATL}}         
\newcommand{\CLIab}{\mathsf{CL}^{\Iab}} 

\newcommand{\Eff}[1]{\langle #1\rangle}

\newcommand{\Iab}{\mathsf{Iab}}

\theoremstyle{definition}
\newtheorem{definition}{Definition}[section]

\theoremstyle{plain}
\newtheorem{theorem}[definition]{Theorem}
\newtheorem{lemma}[definition]{Lemma}
\newtheorem{proposition}[definition]{Proposition}
\newtheorem{corollary}[definition]{Corollary}

\theoremstyle{remark}
\newtheorem{remark}[definition]{Remark}

\title{\textbf{A Logic of Inability}}

\author{
	Shanxia Wang\\
	School of Computer and Information Engineering (School of Artificial Intelligence)\\
	Henan Normal University, Xinxiang City, Henan Province, China\\
	\texttt{wangshanxia@htu.edu.cn}
}

\newenvironment{keywords}
{\par\noindent\textbf{Keywords: }}
{\par\medskip}

\date{}

\begin{document}
	
	\maketitle
	
	\begin{abstract}
		Coalition Logic is primarily concerned with what coalitions can achieve, whereas what coalitions cannot achieve---their \emph{inability}---has received comparatively little explicit attention. 
		This asymmetry matters in artificial intelligence and safety-critical multi-agent systems, where one often needs to specify not merely what agents are instructed or disposed not to do, but what they are \emph{unable} to bring about. 
		We develop a conservative extension of Coalition Logic with an explicit inability operator, interpreted as the negation of coalition ability. 
		This operator is introduced as a conservative and formally tractable starting point for studying inability as a modal concept in its own right. 
		We prove soundness, completeness, and conservativity over standard Coalition Logic, and analyse the resulting modal behaviour: anti-monotonicity with respect to coalition inclusion, contravariance with respect to goal strength, asymmetric interaction with conjunction and disjunction, failure of superadditivity, non-equivalence with opponent ability, and the connection between grand-coalition inability and systemic impossibility. 
		Making this definable operator explicit reveals a systematic modal structure governing the limits of agency, and supports reasoning about constraints, negative capabilities, and impossibility in multi-agent systems.
	\end{abstract}
	
	\begin{keywords}
		Coalition Logic, inability, strategic ability, modal logic, multi-agent systems, AI safety
	\end{keywords}

	\section{Introduction}
	\label{sec:introduction}
	
	\subsection{Motivation: From Ability to Inability}
	\label{subsec:motivation}
	
	Inability is a pervasive concept in reasoning about agency.
	In everyday life, we distinguish what agents \emph{cannot} do from what they merely \emph{do not} do: a locked door renders a person unable to enter, not merely unwilling.
	In philosophy, the analysis of ability and its privation runs from Aristotle's \emph{dynamis} to contemporary debates on the conditional analysis of ``can''~\cite{Austin56,Kenny75,Maier22}.
	In ethics and law, inability carries normative force: the principle \emph{ought implies can}~\cite{Vranas07} entails that inability to comply with a norm extinguishes the obligation to comply.
	In artificial intelligence and multi-agent systems, inability is the substance of safety guarantees: a containment mechanism is trustworthy precisely because the contained system is \emph{unable} to circumvent it~\cite{Bostrom14,Amodei16,Ji23,BH24}; a voting rule is strategy-proof precisely because every voter is \emph{unable} to profit from misrepresentation~\cite{Gibbard73,Satterthwaite75}.
	
	Across these domains, inability functions as a concept in its own right---not merely as the absence of ability, but as a substantive condition that grounds excuses, certifies safety, and delimits responsibility.
	
	Yet when we turn to formal logic, and to the logic of coalitional agency in particular, we find a striking asymmetry.
	The focus has been overwhelmingly on \emph{ability}: what agents and coalitions \emph{can} achieve.
	Inability appears only derivatively, as negated ability.
	This paper initiates the systematic study of inability as an independent object of logical investigation.
	
	\subsection{The Neglect of Inability in Coalition Logic}
	\label{subsec:neglect}
	
	Coalition Logic~\cite{Pauly02,Pauly01} provides operators \(\Eff{C}\varphi\), read ``coalition \(C\) can ensure \(\varphi\).''
	Together with its temporal extension \(\ATL\)~\cite{AHK02}, it has become a standard framework for reasoning about strategic power in multi-agent systems~\cite{Wooldridge09,BGJ15}.
	Subsequent work has enriched the ability modality with epistemic conditions~\cite{HW03,AA19,AJ22}, resource bounds~\cite{ADL14,ABL17}, action constraints~\cite{GLP24,LH22}, and strategy quantifiers~\cite{MMPV14,CLMR23}.
	The result is a mature and productive theory of what coalitions \emph{can} do.
	
	No comparable theory exists for what coalitions \emph{cannot} do.
	
	In standard Coalition Logic, inability is expressed by negation:
	\[
	\neg\Eff{C}\varphi.
	\]
	It has no dedicated operator, no independent axiomatisation, and no systematic study of its structural properties.
	It lives entirely in the shadow of ability.
	
	This is technically sufficient but conceptually limiting.
	The situation is analogous to studying necessity \(\Box\varphi\) without ever isolating possibility \(\Diamond\varphi\), or studying knowledge \(K_i\varphi\)~\cite{Hintikka62} without ever examining ignorance.
	In each case the negative notion is \emph{definable} from the positive one.
	But definability does not imply dispensability: a concept may deserve explicit treatment when it organises reasoning in a distinctive way, reveals structural patterns that are otherwise obscured, and opens questions that do not naturally arise from the positive concept alone.
	
	\subsection{Goal of This Paper: Inability as a First-Class Concept}
	\label{subsec:goal}
	
	The goal of this paper is to establish inability as a first-class concept in the logic of coalitional agency, on equal footing with ability.
	
	We work within classical Coalition Logic and introduce an explicit inability operator \(\Iab_C\varphi\), read ``coalition \(C\) is unable to ensure \(\varphi\),'' defined by
	\[
	\Iab_C\varphi \;\;\leftrightarrow\;\; \neg\Eff{C}\varphi.
	\]
	Coalition \(C\) is unable to ensure \(\varphi\) when it has no joint action that guarantees \(\varphi\) regardless of the actions of the remaining agents.
	The resulting system \(\CLIab\) is a conservative extension of Coalition Logic: every theorem of \(\CLIab\) in the original language is already a theorem of \(\CL\).
	
	This conservativity is by design.
	Our aim is not to increase expressive power but to \emph{reorient} the logical perspective---to place inability at the centre of attention and to develop its theory systematically.
	With the operator \(\Iab_C\) in hand, we investigate:
	\begin{itemize}[nosep]
		\item How does inability behave under coalition expansion and contraction?
		\item How does it interact with goal strength, conjunction, and disjunction?
		\item What is the relationship between a coalition's inability and the ability of its opponents?
	\end{itemize}
	
	The answers reveal a coherent modal profile whose main features are summarised in the contributions below.
	
	\subsection{A Starting Point, Not an Endpoint}
	\label{subsec:starting-point}
	
	We study one specific form of inability: the strategic inability of a coalition to guarantee an outcome in a one-shot concurrent-game setting.
	This is the simplest notion of inability available in Coalition Logic, and we do not claim that it exhausts the concept.
	We view this work as a \emph{starting point} for the logical study of inability, not an endpoint.
	
	The precedent from epistemic logic is instructive.
	Ignorance began as mere negation of knowledge, \(\neg K_i\varphi\), but its explicit study generated a rich theory with its own distinctions and results~\cite{HM85,FH88,vdHL04,BFG23,FH24}.
	Unawareness was shown to collapse in standard Kripke semantics~\cite{DLR98}, yet this very impossibility result spurred the development of new semantic frameworks~\cite{Schipper15,DFP24,HR14} in which unawareness could be modelled non-trivially.
	In both cases, a concept that was initially treated as a simple negation eventually acquired an independent logical life.
	
	We expect inability to follow a similar trajectory.
	Strategic inability under imperfect information~\cite{HW03,JLN23}, resource-bounded inability~\cite{ADL14,ABL17}, institutional inability, and temporal inability in \(\ATL\)-like settings~\cite{AHK02,GHL22} are natural next steps.
	The present paper lays the groundwork: it singles out inability as an explicit concept, identifies its basic laws, and provides the formal foundation on which richer theories can be built.
	
	\subsection{Contributions}
	\label{subsec:contributions}
	
	\begin{enumerate}
		\item \textbf{Inability operator and semantics.}
		We introduce \(\Iab_C\varphi\) as \(\neg\Eff{C}\varphi\) and give its game-theoretic semantics within coalition models, making inability a first-class modality.
		
		\item \textbf{Axiomatisation.}
		We define the logic \(\CLIab\), prove soundness and completeness with respect to coalition models, and establish conservativity over~\(\CL\).
		
		\item \textbf{Structural laws.}
		We develop the modal theory of inability: anti-monotonicity with respect to coalition inclusion, contravariance with respect to goal strength, asymmetric distribution over Boolean connectives, failure of superadditivity, non-equivalence with opponent ability, and exact dualities at boundary coalitions.
		
		\item \textbf{Conceptual positioning.}
		We situate inability among related negative modal concepts, argue for the value of explicit treatment despite definability, and outline applications to AI safety, social choice, and protocol verification.
	\end{enumerate}
	
	\subsection{Organisation}
	\label{subsec:organisation}
	
	Section~\ref{sec:preliminaries} recalls Coalition Logic.
	Section~\ref{sec:inability} introduces the inability operator, develops \(\CLIab\), and proves soundness, completeness, and conservativity.
	Section~\ref{sec:structural-laws} presents the structural laws of inability.
	Section~\ref{sec:discussion} discusses conceptual significance, methodological precedents, and applications.
	Section~\ref{sec:conclusion} concludes.

	\section{Preliminaries: Coalition Logic}
	\label{sec:preliminaries}
	
	This section recalls the syntax, semantics, and axiomatics of Coalition Logic~\cite{Pauly02,Pauly01}.
	Coalition Logic is a modal logic for reasoning about the strategic powers of groups of agents.
	Its central modality
	\[
	\Eff{C}\varphi
	\]
	expresses that coalition \(C\) can ensure \(\varphi\), no matter how the agents outside \(C\) act.
	The presentation below uses a one-step concurrent-game semantics, corresponding to the standard game-frame interpretation of Coalition Logic.
	This is the setting in which the inability operator introduced in the next section will be interpreted.
	
	\subsection{Syntax and Semantics}
	
	Let \(N = \{1,\ldots,n\}\) be a finite non-empty set of agents.
	A \emph{coalition} is any subset \(C \subseteq N\).
	The complement of \(C\) is denoted by \(\overline{C} = N \setminus C\).
	Let \(\Prop\) be a countable set of propositional variables.
	
	\begin{definition}[Language]
		\label{def:cl-language}
		The language \(\mathcal{L}_{\CL}\) of Coalition Logic is defined by:
		\[
		\varphi ::= p \mid \neg\varphi \mid (\varphi \wedge \psi) \mid \Eff{C}\varphi,
		\]
		where \(p \in \Prop\) and \(C \subseteq N\).
	\end{definition}
	
	Boolean connectives \(\vee,\rightarrow,\leftrightarrow\) and constants \(\top,\bot\) are defined as usual.
	The formula \(\Eff{C}\varphi\) is read as ``coalition \(C\) can ensure \(\varphi\).''
	
	The semantics is given by coalition models, which formalise one-step strategic interaction among agents.
	Each agent independently chooses an action, and the combination of all agents' choices determines the outcome state.
	
	\begin{definition}[Coalition model]
		\label{def:coalition-model}
		A \emph{coalition model} is a tuple
		\[
		\mathcal{M} = (S,\{Act_i\}_{i\in N},o,V),
		\]
		where:
		\begin{itemize}
			\item \(S\) is a non-empty set of states;
			\item \(Act_i\) is a non-empty set of actions for each agent \(i\in N\);
			\item \(o:S\times \prod_{i\in N} Act_i \to S\) is an outcome function;
			\item \(V:\Prop \to 2^S\) is a valuation.
		\end{itemize}
	\end{definition}
	
	For a coalition \(C\subseteq N\), let
	\[
	Act_C=\prod_{i\in C}Act_i
	\]
	be the set of joint actions of \(C\).
	For \(C=\emptyset\), \(Act_C\) contains the unique empty action profile.
	Given \(\sigma_C\in Act_C\) and \(\sigma_{\overline{C}}\in Act_{\overline{C}}\), we write
	\[
	o(s,\sigma_C,\sigma_{\overline{C}})
	\]
	for the outcome state when coalition \(C\) plays \(\sigma_C\) and the complementary coalition \(\overline{C}\) plays \(\sigma_{\overline{C}}\).
	In this one-step setting, the term ``strategy'' is occasionally used informally for such a joint action; no temporal or history-dependent strategy is involved.
	
	\begin{definition}[Satisfaction]
		\label{def:cl-satisfaction}
		The satisfaction relation \(\mathcal{M},s\models\varphi\) is defined recursively:
		\begin{align*}
			\mathcal{M},s &\models p
			&&\text{iff} \quad s\in V(p), \\
			\mathcal{M},s &\models \neg\varphi
			&&\text{iff} \quad \mathcal{M},s\not\models \varphi, \\
			\mathcal{M},s &\models \varphi\wedge\psi
			&&\text{iff} \quad \mathcal{M},s\models\varphi \text{ and } \mathcal{M},s\models\psi, \\
			\mathcal{M},s &\models \Eff{C}\varphi
			&&\text{iff} \quad
			\exists \sigma_C\in Act_C\;
			\forall \sigma_{\overline{C}}\in Act_{\overline{C}}:
			\mathcal{M},o(s,\sigma_C,\sigma_{\overline{C}})\models\varphi.
		\end{align*}
	\end{definition}
	
	The modal clause captures the game-theoretic interpretation: \(\Eff{C}\varphi\) holds at state \(s\) when coalition \(C\) has a joint action \(\sigma_C\) such that, no matter how the complementary coalition \(\overline{C}\) acts, the resulting state satisfies \(\varphi\).
	Thus Coalition Logic describes one-step enforceability rather than temporal achievement.
	Temporal extensions such as \(\ATL\)~\cite{AHK02} add explicit operators for reasoning about strategies over paths; see also~\cite{GHL22} for recent developments in game-theoretic semantics for such extensions.
	
	A formula \(\varphi\) is \emph{valid}, written \(\models\varphi\), if \(\mathcal{M},s\models\varphi\) for every coalition model \(\mathcal{M}\) and every state \(s\) in \(\mathcal{M}\).
	
	\begin{remark}
		Coalition models of this kind correspond to game frames in the sense of Pauly~\cite{Pauly02}.
		An alternative semantic approach uses effectivity functions~\cite{Pauly01}, which abstract away from explicit actions and record only which sets of outcomes each coalition can enforce.
		For the one-step language considered here, game-frame semantics and the corresponding playable effectivity-function semantics validate the same set of principles; see~\cite{GJ04} for a detailed comparison.
	\end{remark}
	
	\subsection{Axiomatics}
	
	A standard axiom system for \(\CL\) provides a complete proof theory for Coalition Logic.
	We use the following presentation.
	
	\begin{definition}[The system \(\CL\)]
		\label{def:cl-system}
		The axiom system \(\CL\) consists of the following axiom schemes and rules.
		
		\medskip
		\noindent\textbf{Axiom schemes:}
		\begin{description}
			\item[(PL)] All propositional tautologies.
			
			\item[(T)] \(\Eff{C}\top\), for every \(C\subseteq N\).
			\hfill Every coalition can ensure truth.
			
			\item[(M)] \(\neg\Eff{C}\bot\), for every \(C\subseteq N\).
			\hfill No coalition can ensure contradiction.
			
			\item[(S)]
			\[
			(\Eff{C}\varphi \wedge \Eff{D}\psi)
			\rightarrow
			\Eff{C\cup D}(\varphi\wedge\psi),
			\]
			if \(C\cap D=\emptyset\).
			\hfill Superadditivity.
			
			\item[(G)]
			\[
			\neg\Eff{\emptyset}\neg\varphi \rightarrow \Eff{N}\varphi.
			\]
			\hfill Grand coalition power.
		\end{description}
		
		\medskip
		\noindent\textbf{Rules:}
		\begin{description}
			\item[(MP)] From \(\varphi\) and \(\varphi\rightarrow\psi\), infer \(\psi\).
			\hfill Modus ponens.
			
			\item[(RE)] From \(\vdash \varphi\leftrightarrow\psi\), infer
			\[
			\vdash \Eff{C}\varphi \leftrightarrow \Eff{C}\psi.
			\]
			\hfill Replacement of provable equivalents.
		\end{description}
	\end{definition}
	
	Axiom (T) says that every coalition can ensure truth.
	Together with (RE), this yields that every coalition can ensure any formula provably equivalent to \(\top\).
	Axiom (M) states that no coalition can ensure a contradiction.
	Axiom (S) captures the superadditivity of coalition power: if two disjoint coalitions can separately ensure \(\varphi\) and \(\psi\), then their union can ensure the conjunction \(\varphi\wedge\psi\).
	Axiom (G) expresses the power of the grand coalition: if the empty coalition cannot ensure \(\neg\varphi\)---that is, if not every complete action profile leads to a \(\neg\varphi\)-state---then some complete action profile leads to a \(\varphi\)-state, and the grand coalition can choose such a profile.
	
	Coalition Logic is not a normal modal logic in the usual Kripkean sense~\cite{Chellas80,BRV01}.
	The coalition modality does not in general satisfy the distribution axiom of normal modal logic:
	\[
	\Eff{C}(\varphi\rightarrow\psi)
	\rightarrow
	(\Eff{C}\varphi\rightarrow\Eff{C}\psi).
	\]
	Intuitively, the joint action by which \(C\) ensures \(\varphi\rightarrow\psi\) need not be the same joint action by which \(C\) ensures \(\varphi\), so the two guarantees cannot in general be composed.
	Note, however, that Coalition Logic does satisfy the semantic analogue of necessitation: if \(\models\varphi\), then \(\models\Eff{C}\varphi\) for every coalition \(C\), since every possible outcome state satisfies a valid formula.
	The non-normality lies specifically in the failure of distribution.
	
	\begin{theorem}[Pauly~\cite{Pauly02}]
		\label{thm:cl-complete}
		The system \(\CL\) is sound and complete with respect to the class of coalition models: for every \(\varphi\in\mathcal{L}_{\CL}\),
		\[
		\CL\vdash\varphi
		\quad\text{iff}\quad
		\models\varphi.
		\]
	\end{theorem}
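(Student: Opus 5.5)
Soundness is a routine induction on derivations, so the real work is completeness; for that I would follow Pauly's canonical construction, which first builds a playable effectivity function and then realises it as a game form.

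\emph{Soundness.} I would check each axiom against Definition~\ref{def:cl-satisfaction} and confirm that the rules preserve validity.
\begin{itemize}
\item (PL) and (MP) are immediate.
\item (T) holds because every outcome satisfies \(\top\) and each \(Act_C\) is non-empty; for \(C=\emptyset\) the empty profile is the witness.
\item (M) holds because non-emptiness of \(Act_{\overline{C}}\) gives at least one outcome, and that outcome cannot satisfy \(\bot\).
\item For (S) with \(C\cap D=\emptyset\), take the witnesses \(\sigma_C,\sigma_D\) and combine them into one profile of \(C\cup D\). Any completion by \(\overline{C\cup D}\) is simultaneously a completion of \(\sigma_C\) and of \(\sigma_D\), so the outcome satisfies \(\varphi\wedge\psi\).
\item For (G), if \(\neg\Eff{\emptyset}\neg\varphi\) holds then some full profile yields a \(\varphi\)-state, and \(N\) can play that profile.
\item (RE) preserves validity because provably equivalent formulas are, by the inductive hypothesis, true at the same states, so the modal clause is unaffected.
\end{itemize}

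\emph{Completeness, canonical effectivity function.} I would work with maximal consistent sets (MCS). Let \(S^c\) be the set of MCSs, and for \(\Gamma\in S^c\) put
\[
E_\Gamma(C)=\{X\subseteq S^c : \exists\varphi\,(\hat\varphi\subseteq X \text{ and } \Eff{C}\varphi\in\Gamma)\},
\]
where \(\hat\varphi\) is the set of MCSs containing \(\varphi\). The truth lemma, \(\Eff{C}\varphi\in\Gamma\) iff \(\hat\varphi\in E_\Gamma(C)\), then needs only (RE) and the fact that \(\hat\varphi=\hat\psi\) implies \(\vdash\varphi\leftrightarrow\psi\). Next I would verify that \(E_\Gamma\) is \emph{playable}:
\begin{itemize}
\item \emph{Outcome monotonicity} holds by construction.
\item \emph{Liveness} (\(\emptyset\notin E_\Gamma(C)\)) follows from (M).
\item \emph{Safety} (\(S^c\in E_\Gamma(C)\)) follows from (T).
\item \emph{Superadditivity} follows from (S).
\item \emph{\(N\)-maximality} follows from (G).
\end{itemize}
Coalition monotonicity is derived from (S) together with (T) by splitting \(D=C\cup(D\setminus C)\). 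Since each \(E_\Gamma\) is defined only on definable sets and their supersets, I would restrict attention to the finite language fragment of the subformulas of a given non-theorem, so that only finitely many definable sets are involved.

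\emph{Completeness, realisation.} The step I expect to be the main obstacle is the representation theorem: every playable effectivity function on a finite state space equals the \(\alpha\)-effectivity function of some strategic game form. I would construct that game form as Pauly does:
\begin{itemize}
\item an action of agent \(i\) is a triple \((F_i,x_i,t_i)\), where \(F_i\) is a set in the effectivity function of a coalition containing \(i\), \(x_i\) is a chosen point of \(F_i\), and \(t_i\) is an integer;
\item an outcome is fixed by letting coalitions whose members coordinate consistently on the same \(F\) enforce that set;
\item remaining conflicts are resolved by the modular-sum trick over \(t_i\), which hands control of the final choice to a designated agent.
\end{itemize}
Verifying that each coalition can force exactly its sets relies on superadditivity for disjoint coordinating groups and on \(N\)-maximality for the grand coalition. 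Gluing the local game forms at the states \(\Gamma\) gives a coalition model \((S^c,\{Act_i\},o,V)\) satisfying the truth lemma. Any consistent \(\neg\varphi\) then extends to an MCS at which \(\varphi\) fails, which establishes completeness. Because this statement is Pauly's theorem~\cite{Pauly02}, I would reuse his representation result rather than re-derive it in detail.
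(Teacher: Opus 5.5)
\textbf{The gap is in the truth lemma.} The paper gives no proof here; it cites Pauly, and your outline follows Pauly's route. However, the truth-lemma step is wrong as you state it, and it fails for the system of Definition~\ref{def:cl-system}.

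Your \(E_\Gamma(C)\) is upward closed. So \(\widehat{\varphi}\in E_\Gamma(C)\) only gives some \(\psi\) with \(\Eff{C}\psi\in\Gamma\) and \(\widehat{\psi}\subseteq\widehat{\varphi}\), i.e.\ \(\vdash\psi\rightarrow\varphi\), not \(\vdash\psi\leftrightarrow\varphi\). To conclude \(\Eff{C}\varphi\in\Gamma\) you need the monotonicity rule: from \(\vdash\psi\rightarrow\varphi\) infer \(\vdash\Eff{C}\psi\rightarrow\Eff{C}\varphi\). (RE) does not give this.

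In Pauly's system the rule comes from his axiom \(\Eff{C}(\varphi\wedge\psi)\rightarrow\Eff{C}\varphi\). The paper's list does not contain that axiom; its (M) is Pauly's \(\neg\Eff{C}\bot\). Dropping the upward closure does not help. The truth lemma would then need only (RE), but outcome monotonicity would no longer hold ``by construction''. You cannot do without it, because every \(\alpha\)-effectivity function of a game form is outcome-monotone.

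\textbf{The gap cannot be closed for the system as written.} For \(C=\emptyset\) monotonicity is derivable. Suppose \(\Eff{\emptyset}\varphi\wedge\neg\Eff{\emptyset}(\varphi\vee\psi)\). Then (RE) and (G) give \(\Eff{N}(\neg\varphi\wedge\neg\psi)\). Next, (S) for the disjoint pair \(\emptyset,N\) together with (RE) gives \(\Eff{N}\bot\), contradicting (M). For \(C=N\), argue the same way using the contrapositive of (G).

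For intermediate coalitions monotonicity is not derivable. Let \(N=\{1,2\}\) and \(S=\{a,b,c\}\). At every state, read \(\Eff{C}\varphi\) as \(\|\varphi\|\in E(C)\), where:
\(E(\emptyset)=E(\{2\})=\{S\}\), \(E(\{1\})=\{S,\{a\}\}\), and \(E(N)\) is the family of all non-empty subsets of \(S\).

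Every instance of (PL), (T), (M), (S) and (G) is true at every state of this structure. The only non-trivial (S)-case is the pair \(\{1\},\{2\}\), which gives \(X\cap S=X\neq\emptyset\). Truth at all states is also preserved by (MP) and (RE), so every theorem of the system holds there.

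Now put \(V(p)=\{a\}\) and \(V(q)=\{b\}\). Then \(\Eff{\{1\}}p\rightarrow\Eff{\{1\}}(p\vee q)\) fails, yet it is valid on coalition models (Proposition~\ref{prop:ability-covariance}). The same construction works for any \(n\geq 2\): give every coalition that contains \(1\), other than \(N\), the family \(\{S,\{a\}\}\). So for \(n\geq 2\) the statement is false for the axiom list of Definition~\ref{def:cl-system}, and no argument can establish it.

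Once Pauly's monotonicity axiom is added, your outline is his proof. Two pieces of work then remain:
\begin{itemize}
\item The finite-fragment construction you already anticipate. It is genuinely needed: in the full canonical space \(N\)-maximality can fail for non-definable sets, and the representation theorem is stated for finite state spaces.
\item A step you omit. Coalition models here have state-independent action sets \(Act_i\), so the local game forms must be merged into one model. For example, let \(Act_i\) be the product over states of the local strategy sets, and let \(o\) at each state read off the corresponding component.
\end{itemize}
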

	
	We take this result as background.
	Since the inability operator introduced in the next section is definitionally reducible to coalition ability, the metatheory of the extended system will follow by translation into \(\CL\).
	
	\begin{remark}
		Coalition Logic admits extensions in several directions, including temporal operators~\cite{AHK02}, epistemic conditions~\cite{HW03,AA19}, resource constraints~\cite{ADL14}, and strategy quantifiers~\cite{MMPV14}, as surveyed in Section~\ref{sec:introduction}.
		The present paper focuses on basic one-step Coalition Logic, leaving such extensions to future work.
	\end{remark}

	\section{The Logic of Inability}
	\label{sec:inability}
	
	This section introduces the inability operator, provides its semantics, defines the extended system \(\CLIab\), and establishes soundness, completeness, and conservativity over standard Coalition Logic.
	The inability studied here is coalitional and strategic: \(\Iab_C\varphi\) means that coalition \(C\) has no one-step joint action that guarantees \(\varphi\) against all actions of the complementary coalition.
	
	\subsection{Syntax and Semantics}
	
	We extend the language of Coalition Logic with an explicit inability operator.
	
	\begin{definition}[Extended language]
		\label{def:iab-language}
		The language \(\mathcal{L}_{\Iab}\) is defined by:
		\[
		\varphi ::= p \mid \neg\varphi \mid (\varphi \wedge \psi) \mid \Eff{C}\varphi \mid \Iab_C\varphi,
		\]
		where \(p\in\Prop\) and \(C\subseteq N\).
	\end{definition}
	
	Boolean connectives \(\vee,\rightarrow,\leftrightarrow\) and constants \(\top,\bot\) are defined as usual.
	The formula \(\Iab_C\varphi\) is read as ``coalition \(C\) is unable to ensure \(\varphi\).''
	Its intended meaning is that there is no joint action available to \(C\) that guarantees \(\varphi\), no matter how the agents outside \(C\) act.
	
	The inability operator is governed by the following definitional axiom scheme.
	
	\begin{definition}[Definitional axiom for inability]
		\label{def:iab-def}
		For every coalition \(C\subseteq N\) and formula \(\varphi\):
		\[
		\textup{(Iab-Def)} \qquad
		\Iab_C\varphi \leftrightarrow \neg\Eff{C}\varphi.
		\]
	\end{definition}
	
	Thus \(C\) is unable to ensure \(\varphi\) if and only if it is not the case that \(C\) can ensure \(\varphi\).
	The operator \(\Iab_C\) is introduced as a definitional extension of the standard coalition modality.
	The point is not to increase expressive power, but to make explicit a modal notion whose structural behaviour will be studied in its own right.
	
	The semantic clause follows directly from this definitional principle.
	
	\begin{definition}[Semantics of inability]
		\label{def:iab-semantics}
		For a coalition model \(\mathcal{M}\), state \(s\), coalition \(C\subseteq N\), and formula \(\varphi\in\mathcal{L}_{\Iab}\),
		\[
		\mathcal{M},s \models \Iab_C\varphi
		\quad\text{iff}\quad
		\mathcal{M},s \not\models \Eff{C}\varphi.
		\]
		Equivalently,
		\[
		\mathcal{M},s \models \Iab_C\varphi
		\quad\text{iff}\quad
		\forall \sigma_C\in Act_C\;
		\exists \sigma_{\overline{C}}\in Act_{\overline{C}}:
		\mathcal{M},o(s,\sigma_C,\sigma_{\overline{C}})\not\models\varphi.
		\]
	\end{definition}
	
	This clause is the game-theoretic dual of coalition ability.
	While \(\Eff{C}\varphi\) says that \(C\) has a joint action that guarantees \(\varphi\) against all compatible actions of \(\overline{C}\), the formula \(\Iab_C\varphi\) says that every joint action available to \(C\) can be countered by some action of \(\overline{C}\) leading to a state where \(\varphi\) fails.
	
	It is important to read this carefully.
	The existentially quantified action of \(\overline{C}\) may depend on the action chosen by \(C\).
	Thus \(\Iab_C\varphi\) does not, by itself, say that \(\overline{C}\) has a single joint action that ensures \(\neg\varphi\).
	In particular, one should not identify
	\[
	\Iab_C\varphi
	\quad\text{with}\quad
	\Eff{\overline{C}}\neg\varphi.
	\]
	This distinction will be formally established in Section~\ref{sec:structural-laws}.
	It is closely related to the classical game-theoretic distinction between strategies and best responses in simultaneous-move games~\cite{OR94}.
	
	\begin{remark}[Inability versus falsity]
		\label{rem:iab-vs-falsity}
		The formula \(\Iab_C\varphi\) should not be confused with \(\neg\varphi\).
		Inability is a modal claim about strategic power, not a factual claim about the current state.
		It is possible that
		\[
		\mathcal{M},s\models \varphi \wedge \Iab_C\varphi.
		\]
		In such a case, \(\varphi\) happens to hold at \(s\), but coalition \(C\) cannot guarantee that \(\varphi\) will hold after a joint action profile is executed.
		Conversely, \(\Iab_C\varphi\) does not imply that \(\varphi\) is currently false; it only says that \(C\) lacks the power to ensure \(\varphi\).
	\end{remark}
	
	\begin{remark}[Inability versus actual failure]
		\label{rem:iab-vs-failure}
		The formula \(\Iab_C\varphi\) is also distinct from claims about what coalition \(C\) actually does.
		A coalition may be able to ensure \(\varphi\) but fail to choose an action that does so; conversely, a coalition may aim at \(\varphi\) without having any action that guarantees it.
		Inability concerns the absence of guaranteed enforceability, not the actual choice or outcome.
		This distinction between what agents \emph{can} bring about and what they \emph{do} bring about is central to the logic of agency~\cite{BPX01,Horty01,BH15}.
	\end{remark}
	
	\subsection{The System \texorpdfstring{\(\CLIab\)}{CLIab}}
	
	The logic \(\CLIab\) extends Coalition Logic with the definitional axiom for inability.
	
	\begin{definition}[The system \(\CLIab\)]
		\label{def:CLIab-system}
		The axiom system \(\CLIab\) consists of:
		\begin{itemize}
			\item all axiom schemes and rules of \(\CL\), now applied to formulas of \(\mathcal{L}_{\Iab}\);
			\item the axiom scheme
			\[
			\textup{(Iab-Def)} \qquad
			\Iab_C\varphi \leftrightarrow \neg\Eff{C}\varphi.
			\]
		\end{itemize}
	\end{definition}
	
	Since \(\Iab_C\varphi\) is definitionally equivalent to \(\neg\Eff{C}\varphi\), every formula of \(\mathcal{L}_{\Iab}\) can be translated into a formula of the original language \(\mathcal{L}_{\CL}\).
	
	\begin{definition}[Translation]
		\label{def:iab-translation}
		The translation
		\[
		t:\mathcal{L}_{\Iab}\to\mathcal{L}_{\CL}
		\]
		is defined recursively as follows:
		\begin{align*}
			t(p) &= p, \\
			t(\neg\varphi) &= \neg t(\varphi), \\
			t(\varphi\wedge\psi) &= t(\varphi)\wedge t(\psi), \\
			t(\Eff{C}\varphi) &= \Eff{C}t(\varphi), \\
			t(\Iab_C\varphi) &= \neg\Eff{C}t(\varphi).
		\end{align*}
	\end{definition}
	
	The translation eliminates every occurrence of the inability operator by replacing it with its defining equivalent.
	It is compositional and preserves the Boolean and modal structure of formulae.
	
	\subsection{Metatheory}
	
	We now show that \(\CLIab\) is a conservative definitional extension of \(\CL\) in the sense of~\cite{Enderton01}.
	Soundness and completeness follow by translation into the original language of Coalition Logic.
	
	\begin{lemma}[Truth preservation]
		\label{lem:truth-preservation}
		For every \(\varphi\in\mathcal{L}_{\Iab}\), every coalition model \(\mathcal{M}\), and every state \(s\),
		\[
		\mathcal{M},s\models\varphi
		\quad\text{iff}\quad
		\mathcal{M},s\models t(\varphi).
		\]
	\end{lemma}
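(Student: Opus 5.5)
The proof goes by structural induction on \(\varphi\in\mathcal{L}_{\Iab}\). We prove the biconditional simultaneously for all coalition models \(\mathcal{M}\) and all states \(s\). This generality matters: in the modal cases the induction hypothesis is applied at outcome states \(o(s,\sigma_C,\sigma_{\overline{C}})\), not at \(s\) itself. The clauses of the translation \(t\) in Definition~\ref{def:iab-translation} mirror the grammar of Definition~\ref{def:iab-language}, so there is one case per constructor.

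The base case \(\varphi=p\) is immediate, since \(t(p)=p\). In the Boolean cases \(\neg\psi\) and \(\psi\wedge\chi\), the satisfaction clauses of Definition~\ref{def:cl-satisfaction} are homomorphic. Hence the induction hypothesis for \(\psi\) (and \(\chi\)) at the same state \(s\) gives the claim directly: for instance, \(\mathcal{M},s\models\neg\psi\) iff \(\mathcal{M},s\not\models\psi\) iff \(\mathcal{M},s\not\models t(\psi)\) iff \(\mathcal{M},s\models\neg t(\psi)=t(\neg\psi)\).

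For the ability case \(\varphi=\Eff{C}\psi\), unfold the semantic clause. \(\mathcal{M},s\models\Eff{C}\psi\) holds iff there is \(\sigma_C\) such that for all \(\sigma_{\overline{C}}\) we have \(\mathcal{M},o(s,\sigma_C,\sigma_{\overline{C}})\models\psi\). Apply the induction hypothesis for \(\psi\) at each outcome state \(o(s,\sigma_C,\sigma_{\overline{C}})\). This shows the condition is equivalent to the same quantifier pattern with \(t(\psi)\) in place of \(\psi\), which is exactly \(\mathcal{M},s\models\Eff{C}t(\psi)=t(\Eff{C}\psi)\).

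For the inability case \(\varphi=\Iab_C\psi\), Definition~\ref{def:iab-semantics} gives \(\mathcal{M},s\models\Iab_C\psi\) iff \(\mathcal{M},s\not\models\Eff{C}\psi\). The argument of the previous case (equivalently, the induction hypothesis applied to the subformula \(\Eff{C}\psi\), whose complexity does not exceed that of \(\Iab_C\psi\)) turns this into \(\mathcal{M},s\not\models\Eff{C}t(\psi)\), that is, \(\mathcal{M},s\models\neg\Eff{C}t(\psi)=t(\Iab_C\psi)\).

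There is no real obstacle here. The one point requiring care is that the inductive claim must be stated uniformly over all states. Otherwise the hypothesis for \(\psi\) is not available at the outcome states \(o(s,\sigma_C,\sigma_{\overline{C}})\) where the modal clauses evaluate it.
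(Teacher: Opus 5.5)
Your proof is correct and follows the paper's argument: structural induction uniform over states, with the ability case applying the induction hypothesis for $\psi$ at every outcome state, and the inability case reduced to the ability case. One small slip is that $\Eff{C}\psi$ is not a subformula of $\Iab_C\psi$ and has the same complexity, so your parenthetical appeal to the induction hypothesis for it is not licensed as stated. Your main justification is sound and is what the paper does: rerun the ability-case argument, which needs the hypothesis only for $\psi$.
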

	
	\begin{proof}
		By induction on the structure of \(\varphi\).
		The propositional and Boolean cases are straightforward.
		
		For the ability case \(\varphi=\Eff{C}\psi\), we have \(t(\Eff{C}\psi)=\Eff{C}t(\psi)\).
		By the induction hypothesis, for every state \(s'\),
		\[
		\mathcal{M},s'\models\psi
		\quad\text{iff}\quad
		\mathcal{M},s'\models t(\psi).
		\]
		Therefore, for every \(\sigma_C\in Act_C\) and \(\sigma_{\overline{C}}\in Act_{\overline{C}}\),
		\[
		\mathcal{M},o(s,\sigma_C,\sigma_{\overline{C}})\models\psi
		\quad\text{iff}\quad
		\mathcal{M},o(s,\sigma_C,\sigma_{\overline{C}})\models t(\psi).
		\]
		Hence
		\[
		\mathcal{M},s\models\Eff{C}\psi
		\quad\text{iff}\quad
		\mathcal{M},s\models\Eff{C}t(\psi) = t(\Eff{C}\psi).
		\]
		
		For the inability case \(\varphi=\Iab_C\psi\), by Definition~\ref{def:iab-semantics},
		\[
		\mathcal{M},s\models \Iab_C\psi
		\quad\text{iff}\quad
		\mathcal{M},s\not\models \Eff{C}\psi.
		\]
		By the ability case,
		\[
		\mathcal{M},s\models \Eff{C}\psi
		\quad\text{iff}\quad
		\mathcal{M},s\models \Eff{C}t(\psi).
		\]
		Thus
		\[
		\mathcal{M},s\models \Iab_C\psi
		\quad\text{iff}\quad
		\mathcal{M},s\models \neg\Eff{C}t(\psi)
		= t(\Iab_C\psi). \qedhere
		\]
	\end{proof}
	
	\begin{lemma}[Provable equivalence]
		\label{lem:provable-equivalence}
		For every \(\varphi\in\mathcal{L}_{\Iab}\),
		\[
		\CLIab\vdash \varphi \leftrightarrow t(\varphi).
		\]
	\end{lemma}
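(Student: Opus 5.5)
The plan is to argue by induction on the structure of \(\varphi\), establishing at each stage that \(\CLIab\vdash \varphi\leftrightarrow t(\varphi)\) using only (PL), (MP), (RE) and (Iab-Def). This mirrors the semantic argument for Lemma~\ref{lem:truth-preservation}, but now stays inside the proof system.

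\textbf{Base and Boolean cases.} For \(\varphi=p\) the claim is \(p\leftrightarrow p\), an instance of (PL).

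For \(\varphi=\neg\psi\), the induction hypothesis gives \(\vdash\psi\leftrightarrow t(\psi)\). The tautology \((\psi\leftrightarrow t(\psi))\rightarrow(\neg\psi\leftrightarrow\neg t(\psi))\) is an instance of (PL), so (MP) yields \(\vdash\neg\psi\leftrightarrow t(\neg\psi)\).

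For conjunction the argument is the same. From the two hypotheses \(\vdash\psi\leftrightarrow t(\psi)\) and \(\vdash\chi\leftrightarrow t(\chi)\), the appropriate tautology and two applications of (MP) give \(\vdash(\psi\wedge\chi)\leftrightarrow(t(\psi)\wedge t(\chi))\).

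\textbf{Modal cases.} This is the only place where care is needed, since \(\Eff{C}\) is not normal and the argument may use only (RE).

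For \(\varphi=\Eff{C}\psi\), the induction hypothesis \(\vdash\psi\leftrightarrow t(\psi)\) is exactly the premise (RE) requires. Applying (RE) gives \(\vdash\Eff{C}\psi\leftrightarrow\Eff{C}t(\psi)\), and the right-hand side is \(t(\Eff{C}\psi)\) by Definition~\ref{def:iab-translation}. It is essential that (RE) is stated for \(\CLIab\) applied to formulas of \(\mathcal{L}_{\Iab}\), as Definition~\ref{def:CLIab-system} stipulates; otherwise \(\psi\) might contain \(\Iab\) and the rule would not apply.

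For \(\varphi=\Iab_C\psi\), the chain of equivalences is
\[
\Iab_C\psi \;\leftrightarrow\; \neg\Eff{C}\psi \;\leftrightarrow\; \neg\Eff{C}t(\psi).
\]
The first link is an instance of (Iab-Def). The second comes from the preceding case, \(\vdash\Eff{C}\psi\leftrightarrow\Eff{C}t(\psi)\), negated via (PL) and (MP). Chaining the two biconditionals by propositional transitivity gives \(\vdash\Iab_C\psi\leftrightarrow t(\Iab_C\psi)\).

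\textbf{Expected obstacle.} No step is genuinely hard. The one point to check is that (RE) is available for arbitrary \(\mathcal{L}_{\Iab}\)-formulas, which is what lets the induction pass through nested modalities without a distribution axiom. The rest is bookkeeping with propositional tautologies.
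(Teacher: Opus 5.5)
Your proposal is correct and follows essentially the same route as the paper. Both argue by structural induction, apply (RE) to the induction hypothesis in the ability case, and in the inability case chain (Iab-Def) with the negated (RE)-equivalence \(\Eff{C}\psi\leftrightarrow\Eff{C}t(\psi)\). Since \(\Eff{C}\psi\) is not itself a subformula of \(\Iab_C\psi\), that equivalence should be read as re-deriving the (RE) step from the hypothesis on \(\psi\), which is exactly what the paper does.
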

	
	\begin{proof}
		By induction on the structure of \(\varphi\).
		The propositional and Boolean cases are straightforward.
		
		For the ability case \(\varphi=\Eff{C}\psi\), the induction hypothesis gives \(\CLIab\vdash \psi \leftrightarrow t(\psi)\).
		By rule (RE),
		\[
		\CLIab\vdash \Eff{C}\psi \leftrightarrow \Eff{C}t(\psi)
		=
		t(\Eff{C}\psi).
		\]
		
		For the inability case \(\varphi=\Iab_C\psi\), the induction hypothesis and (RE) yield \(\CLIab\vdash \Eff{C}\psi \leftrightarrow \Eff{C}t(\psi)\), whence by propositional reasoning \(\CLIab\vdash \neg\Eff{C}\psi \leftrightarrow \neg\Eff{C}t(\psi)\).
		By (Iab-Def),
		\[
		\CLIab\vdash \Iab_C\psi \leftrightarrow \neg\Eff{C}\psi
		\leftrightarrow \neg\Eff{C}t(\psi)
		=
		t(\Iab_C\psi). \qedhere
		\]
	\end{proof}
	
	\begin{theorem}[Soundness and completeness]
		\label{thm:CLIab-complete}
		For every \(\varphi\in\mathcal{L}_{\Iab}\),
		\[
		\CLIab\vdash\varphi
		\quad\text{iff}\quad
		\models\varphi.
		\]
	\end{theorem}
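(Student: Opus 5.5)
The plan is to reduce both directions to Theorem~\ref{thm:cl-complete} via the translation \(t\), using Lemma~\ref{lem:truth-preservation} on the semantic side and Lemma~\ref{lem:provable-equivalence} on the syntactic side. The only piece still to supply is a simple observation: every \(\CL\)-derivation of a formula of \(\mathcal{L}_{\CL}\) is also a \(\CLIab\)-derivation. This holds because \(\CLIab\) contains all axiom schemes and rules of \(\CL\), instantiated over the larger language \(\mathcal{L}_{\Iab}\supseteq\mathcal{L}_{\CL}\).

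\emph{Soundness.} I will argue by induction on the length of derivations in \(\CLIab\) that every theorem is valid. For the axioms (PL), (T), (M), (S) and (G) instantiated with \(\mathcal{L}_{\Iab}\)-formulas, I will not re-verify them semantically. Instead I observe that \(t\) commutes with the connectives and with \(\Eff{C}\), so the translation of an instance of a \(\CL\) scheme is an instance of the same scheme in \(\mathcal{L}_{\CL}\). That translated instance is valid by Theorem~\ref{thm:cl-complete}, and hence the original instance is valid by Lemma~\ref{lem:truth-preservation}. For (Iab-Def), validity is immediate from Definition~\ref{def:iab-semantics}. (MP) preserves validity trivially. For (RE), the induction hypothesis gives \(\models\varphi\leftrightarrow\psi\), so \(\varphi\) and \(\psi\) have the same extension in every model. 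The satisfaction clause for \(\Eff{C}\) depends only on the extension of its argument, so \(\models\Eff{C}\varphi\leftrightarrow\Eff{C}\psi\).

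\emph{Completeness.} Suppose \(\models\varphi\) with \(\varphi\in\mathcal{L}_{\Iab}\). By Lemma~\ref{lem:truth-preservation}, \(\models t(\varphi)\), and \(t(\varphi)\in\mathcal{L}_{\CL}\). By Theorem~\ref{thm:cl-complete}, \(\CL\vdash t(\varphi)\), and by the observation above, \(\CLIab\vdash t(\varphi)\). Lemma~\ref{lem:provable-equivalence} gives \(\CLIab\vdash\varphi\leftrightarrow t(\varphi)\), so modus ponens (with a propositional tautology) yields \(\CLIab\vdash\varphi\).

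The main obstacle is the soundness treatment of the \(\CL\) axioms and (RE) when they are applied to formulas containing \(\Iab\). Theorem~\ref{thm:cl-complete} speaks only about \(\mathcal{L}_{\CL}\), so it cannot be invoked directly on such instances. I will route this through \(t\), checking that the translation of each scheme instance is again an instance of that scheme. For (RE) I will use the extensional semantic argument, since the rule's premise is itself a derived \(\CLIab\)-theorem and so cannot be handled by translating a single instance.
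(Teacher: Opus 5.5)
Your proposal is correct and is essentially the paper's proof. Completeness is the same chain (Lemma~\ref{lem:truth-preservation}, Theorem~\ref{thm:cl-complete}, inclusion of \(\CL\) in \(\CLIab\), Lemma~\ref{lem:provable-equivalence}, modus ponens). Your soundness argument only spells out what the paper asserts in a line, by pushing inherited scheme instances through \(t\) and handling (RE) extensionally.

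One caveat applies equally to the paper's proof. Theorem~\ref{thm:cl-complete} is invoked for the axiom list of Definition~\ref{def:cl-system}, which omits Pauly's monotonicity axiom \(\Eff{C}(\varphi\wedge\psi)\rightarrow\Eff{C}\psi\); the scheme labelled (M) there is \(\neg\Eff{C}\bot\). Without monotonicity, completeness fails once \(|N|\geq 2\). For example, take \(N=\{1,2\}\), \(S=\{a,b,c\}\), and the neighbourhood model in which \(\emptyset\) and \(\{2\}\) are effective only for \(S\), \(\{1\}\) only for \(S\) and \(\{a\}\), and \(N\) for every non-empty set, with \(\Iab_C\) read as the negated neighbourhood clause. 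This model validates all listed axioms and (Iab-Def) and is closed under (RE). Yet under \(V(p)=\{a,b\}\), \(V(q)=\{a\}\) it refutes \(\Eff{\{1\}}(p\wedge q)\rightarrow\Eff{\{1\}}p\), which is valid over coalition models. So that axiom has to be restored for either reduction to go through.
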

	
	\begin{proof}
		\emph{Soundness.}
		The axiom schemes inherited from \(\CL\) remain valid when applied to formulae of \(\mathcal{L}_{\Iab}\).
		The definitional axiom (Iab-Def) is valid by Definition~\ref{def:iab-semantics}.
		The rules (MP) and (RE) preserve validity.
		Therefore every theorem of \(\CLIab\) is valid.
		
		\medskip
		
		\emph{Completeness.}
		Suppose \(\models\varphi\).
		By Lemma~\ref{lem:truth-preservation}, \(\models t(\varphi)\).
		Since \(t(\varphi)\in\mathcal{L}_{\CL}\), by Theorem~\ref{thm:cl-complete}, \(\CL\vdash t(\varphi)\).
		Since every theorem of \(\CL\) is a theorem of \(\CLIab\), we have \(\CLIab\vdash t(\varphi)\).
		By Lemma~\ref{lem:provable-equivalence}, \(\CLIab\vdash \varphi \leftrightarrow t(\varphi)\).
		Hence, by modus ponens, \(\CLIab\vdash \varphi\).
	\end{proof}
	
	\begin{theorem}[Conservative extension]
		\label{thm:conservative}
		The system \(\CLIab\) is a conservative extension of \(\CL\): for every \(\varphi\in\mathcal{L}_{\CL}\),
		\[
		\CLIab\vdash\varphi
		\quad\text{iff}\quad
		\CL\vdash\varphi.
		\]
	\end{theorem}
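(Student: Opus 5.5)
The argument splits into the two directions of the biconditional; the right-to-left direction is essentially immediate and the left-to-right direction goes through semantics.

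For \(\CL\vdash\varphi \Rightarrow \CLIab\vdash\varphi\), take any \(\CL\)-derivation of \(\varphi\). Every axiom instance of \(\CL\) over \(\mathcal{L}_{\CL}\) is also an instance of the same scheme over \(\mathcal{L}_{\Iab}\). Every application of (MP) or (RE) in that derivation is an application of the same rule in \(\CLIab\). Hence the derivation is, verbatim, a \(\CLIab\)-derivation. This is the fact already used in the completeness half of Theorem~\ref{thm:CLIab-complete}.

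For \(\CLIab\vdash\varphi \Rightarrow \CL\vdash\varphi\), with \(\varphi\in\mathcal{L}_{\CL}\), I will route through validity rather than transform proofs syntactically. Suppose \(\CLIab\vdash\varphi\). By the soundness half of Theorem~\ref{thm:CLIab-complete}, \(\models\varphi\), where validity is understood with respect to the satisfaction relation extended by Definition~\ref{def:iab-semantics}. Since \(\varphi\) contains no \(\Iab\)-operator, the extended satisfaction relation agrees with that of Definition~\ref{def:cl-satisfaction} on \(\varphi\). One can confirm this with a trivial induction, or simply by noting that \(t(\varphi)=\varphi\) for \(\varphi\in\mathcal{L}_{\CL}\) (an easy induction on Definition~\ref{def:iab-translation}) together with Lemma~\ref{lem:truth-preservation}. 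So \(\varphi\) is valid in the sense of Coalition Logic, and Theorem~\ref{thm:cl-complete} yields \(\CL\vdash\varphi\).

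The only point needing care, and thus the main obstacle, is the soundness step. It relies on the \(\CL\) schemes (T), (M), (S), (G) and the rule (RE) remaining valid when instantiated with formulas containing \(\Iab\). This holds because their validity depends only on the truth sets of the instantiated subformulas in coalition models, not on their syntactic shape. The soundness part of Theorem~\ref{thm:CLIab-complete} already asserts this, so I will cite it.

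As a backup, there is a purely syntactic route that avoids semantics. Apply \(t\) to each line of a \(\CLIab\)-derivation and check that each translated line is \(\CL\)-provable. Axiom instances translate to axiom instances, and (Iab-Def) translates to the tautology \(\neg\Eff{C}t(\psi)\leftrightarrow\neg\Eff{C}t(\psi)\). (MP) and (RE) commute with \(t\), since \(t\) is compositional. Finally, \(t(\varphi)=\varphi\) for \(\varphi\in\mathcal{L}_{\CL}\).
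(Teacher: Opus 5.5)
Your main argument is exactly the paper's: the right-to-left direction holds because $\CLIab$ contains all axioms and rules of $\CL$, and the left-to-right direction goes through soundness of $\CLIab$ (Theorem~\ref{thm:CLIab-complete}) and then completeness of $\CL$ (Theorem~\ref{thm:cl-complete}); your remark that the extended satisfaction relation agrees with the $\CL$ one on $\mathcal{L}_{\CL}$ makes explicit a step the paper leaves implicit. Your backup route, translating a whole derivation line by line via $t$, is also correct and purely proof-theoretic, so it would establish conservativity even without Pauly's completeness theorem.
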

	
	\begin{proof}
		The right-to-left direction is immediate, since \(\CLIab\) includes all axioms and rules of \(\CL\).
		For the left-to-right direction, suppose \(\varphi\in\mathcal{L}_{\CL}\) and \(\CLIab\vdash\varphi\).
		By Theorem~\ref{thm:CLIab-complete}, \(\models\varphi\).
		Since \(\varphi\in\mathcal{L}_{\CL}\), by Theorem~\ref{thm:cl-complete}, \(\CL\vdash\varphi\).
	\end{proof}
	
	\begin{remark}[Why make inability explicit?]
		\label{rem:first-class}
		The conservativity result confirms that \(\Iab_C\) adds no expressive power to Coalition Logic.
		This is intentional.
		The aim of the extension is to isolate inability as a modal notion whose structural behaviour can be stated and studied directly, rather than to obtain new definability results.
		Working only with formulae of the form \(\neg\Eff{C}\varphi\) is technically sufficient, but it tends to obscure the patterns that emerge once inability is treated as an explicit modality.
		The next section develops these patterns systematically.
	\end{remark}

	\section{Structural Laws of Inability}
	\label{sec:structural-laws}
	
	Although inability is introduced as the Boolean dual of coalition ability,
	\[
	\Iab_C\varphi \leftrightarrow \neg\Eff{C}\varphi,
	\]
	making it explicit reveals a distinctive structural profile.
	The operator \(\Iab_C\) obeys laws governing coalition inclusion, goal strength, Boolean connectives, and boundary coalitions.
	Some of these laws are direct contrapositives of familiar ability principles; others mark important limits on how inability should be understood.
	In particular, inability of \(C\) to ensure \(\varphi\) should not be confused with ability of the complementary coalition \(\overline{C}\) to ensure \(\neg\varphi\).
	
	All validity and invalidity claims in this section are understood over the class of coalition models defined in Section~\ref{sec:preliminaries}.
	All countermodels below are evaluated at a designated initial state \(s\).
	Outcomes at states other than \(s\), and any outcomes not explicitly specified, may be chosen arbitrarily, since they play no role in the relevant evaluations.
	
	\subsection{Anti-Monotonicity over Coalitions}
	
	Coalition ability is monotonic with respect to coalition inclusion: larger coalitions inherit the capabilities of smaller ones~\cite{Pauly02}.
	Inability exhibits the corresponding dual behaviour.
	
	\begin{proposition}[Monotonicity of ability]
		\label{prop:ability-mono}
		For all coalitions \(C\subseteq D\subseteq N\),
		\[
		\models \Eff{C}\varphi \rightarrow \Eff{D}\varphi.
		\]
	\end{proposition}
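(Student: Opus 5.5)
I will argue directly from the game-theoretic satisfaction clause in Definition~\ref{def:cl-satisfaction}, rather than via the axioms, although a syntactic route through (S) and (T) is also available. Fix a coalition model \(\mathcal{M}\), a state \(s\), and coalitions \(C\subseteq D\subseteq N\). Assume \(\mathcal{M},s\models\Eff{C}\varphi\), so there is a witness \(\sigma_C\in Act_C\) such that every \(\sigma_{\overline{C}}\in Act_{\overline{C}}\) yields an outcome satisfying \(\varphi\). The task is to build a witness for \(D\).

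The key step is to extend \(\sigma_C\) to a joint action of \(D\). The agents in \(D\setminus C\) may play arbitrary actions, which exist because each \(Act_i\) is non-empty. Concretely, I fix some \(a_i\in Act_i\) for each \(i\in D\setminus C\) and set \(\sigma_D=(\sigma_C,(a_i)_{i\in D\setminus C})\).

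Now let \(\sigma_{\overline{D}}\in Act_{\overline{D}}\) be arbitrary. Since \(\overline{C}=(D\setminus C)\cup\overline{D}\) is a disjoint union, the pair \(((a_i)_{i\in D\setminus C},\sigma_{\overline{D}})\) is a joint action \(\tau\in Act_{\overline{C}}\). Moreover, \(o(s,\sigma_D,\sigma_{\overline{D}})=o(s,\sigma_C,\tau)\), because both denote the same full action profile. By the choice of \(\sigma_C\), this outcome satisfies \(\varphi\). Hence \(\mathcal{M},s\models\Eff{D}\varphi\). Since \(\mathcal{M}\) and \(s\) were arbitrary, the implication is valid.

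The only point needing care is this identification of profiles, together with the boundary cases. When \(C=\emptyset\), \(\sigma_C\) is the empty profile. When \(D=N\), \(Act_{\overline{D}}\) is the singleton containing the empty profile. Both cases are covered by the same argument, since products over the empty index set are singletons.

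As an alternative syntactic check, I could use axiom (S) with \(C\) and \(D\setminus C\), together with (T) for \(D\setminus C\). This gives \(\Eff{D}(\varphi\wedge\top)\), and (RE) then yields \(\Eff{D}\varphi\). Soundness of \(\CL\) (Theorem~\ref{thm:cl-complete}) would then deliver validity as well.
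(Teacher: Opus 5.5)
Your proof is correct and is essentially the paper's argument: you extend the witness $\sigma_C$ by an arbitrary choice of actions for $D\setminus C$, and observe that this choice together with any $\sigma_{\overline{D}}$ is a joint action of $\overline{C}$ producing the same full profile. Your extra care about non-emptiness of each $Act_i$ and the empty-product boundary cases is welcome. The syntactic route via (S), (T) and (RE) also works, but if $\varphi$ contains $\Iab$ you should cite soundness of $\CLIab$ (Theorem~\ref{thm:CLIab-complete}) rather than Theorem~\ref{thm:cl-complete}.
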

	
	\begin{proof}
		Suppose \(\mathcal{M},s\models\Eff{C}\varphi\).
		Then there exists \(\sigma_C\in Act_C\) such that for every \(\sigma_{\overline{C}}\in Act_{\overline{C}}\),
		\[
		\mathcal{M},o(s,\sigma_C,\sigma_{\overline{C}})\models\varphi.
		\]
		Since \(C\subseteq D\), choose any \(\tau_{D\setminus C}\in Act_{D\setminus C}\), and let \(\sigma_D=(\sigma_C,\tau_{D\setminus C})\in Act_D\).
		For every \(\sigma_{\overline{D}}\in Act_{\overline{D}}\), the pair \((\tau_{D\setminus C},\sigma_{\overline{D}})\) forms a joint action of \(\overline{C}\).
		By the choice of \(\sigma_C\), the corresponding outcome satisfies \(\varphi\).
		Hence \(\mathcal{M},s\models\Eff{D}\varphi\).
	\end{proof}
	
	\begin{theorem}[Anti-monotonicity of inability]
		\label{thm:inability-antimono}
		For all coalitions \(C\subseteq D\subseteq N\),
		\[
		\models \Iab_D\varphi \rightarrow \Iab_C\varphi.
		\]
	\end{theorem}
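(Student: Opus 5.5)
The plan is to obtain Theorem~\ref{thm:inability-antimono} as the contrapositive of Proposition~\ref{prop:ability-mono}, using the semantic clause for inability in Definition~\ref{def:iab-semantics}. Fix a coalition model \(\mathcal{M}\), a state \(s\), coalitions \(C\subseteq D\subseteq N\), and a formula \(\varphi\in\mathcal{L}_{\Iab}\). Assume \(\mathcal{M},s\models\Iab_D\varphi\). By Definition~\ref{def:iab-semantics} this means \(\mathcal{M},s\not\models\Eff{D}\varphi\). The goal is to show \(\mathcal{M},s\not\models\Eff{C}\varphi\), which again by Definition~\ref{def:iab-semantics} gives \(\mathcal{M},s\models\Iab_C\varphi\).

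For the middle step I will argue by contradiction. Suppose \(\mathcal{M},s\models\Eff{C}\varphi\). Proposition~\ref{prop:ability-mono} applies because \(C\subseteq D\), and it yields \(\mathcal{M},s\models\Eff{D}\varphi\), contradicting the assumption. One point needs checking. Proposition~\ref{prop:ability-mono} is stated for formulas of the base language, whereas here \(\varphi\) may contain \(\Iab\). Its proof, however, uses only the semantic clause for \(\Eff{\cdot}\) together with the fact that satisfaction of \(\varphi\) at outcome states is well-defined, so it goes through verbatim for \(\varphi\in\mathcal{L}_{\Iab}\). Alternatively, I can apply Lemma~\ref{lem:truth-preservation}: replace \(\varphi\) by \(t(\varphi)\in\mathcal{L}_{\CL}\), invoke the proposition there, and translate back.

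Since \(\mathcal{M}\) and \(s\) are arbitrary, the implication is valid. As an optional remark, I can also note the syntactic version. From the \(\CL\)-theorem \(\Eff{C}\varphi\rightarrow\Eff{D}\varphi\), propositional contraposition and (Iab-Def) give \(\CLIab\vdash\Iab_D\varphi\rightarrow\Iab_C\varphi\), in line with Theorem~\ref{thm:CLIab-complete}.

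There is no real obstacle here. The only step needing care is the one just discussed: making sure Proposition~\ref{prop:ability-mono} is applicable to arbitrary \(\mathcal{L}_{\Iab}\)-formulas, which the translation lemma settles.
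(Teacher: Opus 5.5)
Your proof is correct and is essentially the paper's own argument: the paper also contraposes Proposition~\ref{prop:ability-mono} to get $\models \neg\Eff{D}\varphi \rightarrow \neg\Eff{C}\varphi$ and then rewrites both sides via (Iab-Def). Your extra check that the proposition covers $\mathcal{L}_{\Iab}$-formulas is harmless, since its semantic proof never depends on the form of $\varphi$, but the paper does not need to spell it out.
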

	
	\begin{proof}
		By Proposition~\ref{prop:ability-mono} and propositional contraposition,
		\[
		\models \neg\Eff{D}\varphi \rightarrow \neg\Eff{C}\varphi.
		\]
		By (Iab-Def), this is exactly \(\models \Iab_D\varphi \rightarrow \Iab_C\varphi\).
	\end{proof}
	
	Anti-monotonicity captures the intuition that if a larger coalition cannot ensure a goal, then no smaller subcoalition can ensure it either.
	Additional agents may add possible coordinated actions; they do not remove the abilities already available to a smaller coalition.
	The converse direction fails in general.
	
	\begin{proposition}[Failure of upward propagation]
		\label{prop:upward-failure}
		There exist coalitions \(C\subsetneq D\) such that
		\[
		\not\models \Iab_C\varphi \rightarrow \Iab_D\varphi.
		\]
	\end{proposition}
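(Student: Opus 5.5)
We will prove the statement by exhibiting an explicit countermodel. Take \(N=\{1\}\) (or any \(N\) with at least one agent), \(C=\emptyset\), \(D=\{1\}\), and \(\varphi=p\) for a propositional variable \(p\). The task is to build a coalition model and a state \(s\) at which \(\Iab_\emptyset p\) holds but \(\Iab_{\{1\}}p\) fails. By Definition~\ref{def:iab-semantics}, this means that \(\Eff{\emptyset}p\) fails at \(s\) while \(\Eff{\{1\}}p\) holds at \(s\).

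For the construction, let \(S=\{s,t,u\}\) and \(Act_1=\{a,b\}\). Set \(o(s,a)=t\) and \(o(s,b)=u\), and define outcomes at \(t\) and \(u\) arbitrarily, for instance as self-loops. Let \(V(p)=\{t\}\). We then check the two clauses.

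\begin{itemize}
\item For \(D=\{1\}\), the complement \(\overline{D}\) is empty, so \(Act_{\overline{D}}\) contains only the empty profile. Agent 1 playing \(a\) yields \(t\), where \(p\) holds. Hence \(\mathcal{M},s\models\Eff{\{1\}}p\), and so \(\mathcal{M},s\not\models\Iab_{\{1\}}p\).
\item For \(C=\emptyset\), \(Act_\emptyset\) has only the empty profile. The complement \(\{1\}\) can play \(b\), reaching \(u\), where \(p\) fails. Hence \(\mathcal{M},s\not\models\Eff{\emptyset}p\), that is, \(\mathcal{M},s\models\Iab_\emptyset p\).
\end{itemize}

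Therefore \(\Iab_\emptyset p\rightarrow\Iab_{\{1\}}p\) is false at \(s\), so it is not valid. Since \(\emptyset\subsetneq\{1\}\), this establishes the proposition.

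The only delicate point is the boundary convention for the empty coalition and the grand coalition, namely the single empty action profile stipulated after Definition~\ref{def:coalition-model}; the verification must apply it correctly. If a reader prefers a non-degenerate example with \(C\neq\emptyset\), the same idea works with \(N=\{1,2\}\), \(C=\{1\}\), \(D=\{1,2\}\): make the outcome satisfy \(p\) only on the profile \((a,a)\). Then agent 1 alone cannot force \(p\), since agent 2 can deviate, while the pair can choose \((a,a)\). Either version suffices.
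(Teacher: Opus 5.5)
Your proof is correct. Your fallback example ($N=\{1,2\}$, $C=\{1\}$, $D=\{1,2\}$, with $p$ reached only at $(a,a)$) is exactly the countermodel the paper uses.

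Your primary witness is different. You use the boundary inclusion $\emptyset\subsetneq\{1\}$ in a one-agent model, so the verification rests on the empty-profile convention rather than on coordination between two agents. This gives a little extra generality: it works for every non-empty agent set, including $n=1$, where $\emptyset\subsetneq\{1\}$ is the only strict inclusion available. It is in effect an instance of the boundary dualities (Theorems~\ref{thm:grand-coalition} and~\ref{thm:empty-coalition}). With $N=\{1\}$, the implication $\Iab_{\emptyset}p\rightarrow\Iab_{\{1\}}p$ amounts to $\Eff{N}\neg p\rightarrow\Eff{\emptyset}\neg p$, i.e.\ ``some profile yields $\neg p$'' implies ``every profile does.''

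The paper's witness says more about agency. Both coalitions are non-empty, so the failure shows cooperation dissolving inability rather than the trivial powerlessness of the empty coalition. The same model is also reused for the failure of superadditivity (Theorem~\ref{thm:superadd-failure}).
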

	
	\begin{proof}
		Let \(N=\{1,2\}\), \(C=\{1\}\), \(D=\{1,2\}\), \(Act_1=Act_2=\{a,b\}\), \(S=\{s,t,u\}\), and \(V(p)=\{t\}\).
		Define \(o(s,x,y)=t\) iff \(x=a\) and \(y=a\), and let all other outcomes at \(s\) be \(u\).
		
		Agent \(1\) alone cannot ensure \(p\): if agent \(1\) plays \(a\), agent \(2\) may play \(b\); if agent \(1\) plays \(b\), the outcome is \(u\).
		Thus \(\mathcal{M},s\models \Iab_{\{1\}}p\).
		However, the grand coalition \(\{1,2\}\) can ensure \(p\) by choosing \((a,a)\).
		Hence \(\mathcal{M},s\not\models \Iab_{\{1,2\}}p\).
	\end{proof}
	
	\subsection{Contravariance over Goal Strength}
	
	Ability is covariant with respect to logical strength: if \(\varphi\) entails \(\psi\), then ensuring \(\varphi\) entails ensuring \(\psi\).
	Inability is contravariant.
	
	\begin{proposition}[Covariance of ability]
		\label{prop:ability-covariance}
		If \(\models \varphi \rightarrow \psi\), then
		\[
		\models \Eff{C}\varphi \rightarrow \Eff{C}\psi.
		\]
	\end{proposition}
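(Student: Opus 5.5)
The plan is a direct semantic argument from the satisfaction clause in Definition~\ref{def:cl-satisfaction}, in the same style as the proof of Proposition~\ref{prop:ability-mono}. Assume \(\models\varphi\rightarrow\psi\), and fix an arbitrary coalition model \(\mathcal{M}\) and state \(s\) with \(\mathcal{M},s\models\Eff{C}\varphi\). I will show \(\mathcal{M},s\models\Eff{C}\psi\). Since the model and state are arbitrary, this gives \(\models\Eff{C}\varphi\rightarrow\Eff{C}\psi\).

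The steps are as follows.
\begin{enumerate}
\item Unfold the hypothesis. There is a witness \(\sigma_C\in Act_C\) such that for every \(\sigma_{\overline{C}}\in Act_{\overline{C}}\) we have \(\mathcal{M},o(s,\sigma_C,\sigma_{\overline{C}})\models\varphi\).
\item Fix any \(\sigma_{\overline{C}}\) and write \(s'=o(s,\sigma_C,\sigma_{\overline{C}})\). Because \(\varphi\rightarrow\psi\) is valid, it holds at \(s'\) in \(\mathcal{M}\). By the clauses for \(\neg\) and \(\wedge\), the defined implication yields \(\mathcal{M},s'\models\psi\).
\item The same \(\sigma_C\) therefore guarantees \(\psi\) against every response of \(\overline{C}\). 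This is exactly the clause for \(\mathcal{M},s\models\Eff{C}\psi\).
\end{enumerate}

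There is essentially no obstacle. The only point needing care is that validity means truth at every state of every model, so it applies in particular at the outcome states reached from \(s\). One can also phrase this through the system. Completeness (Theorem~\ref{thm:cl-complete}) gives \(\CL\vdash\varphi\rightarrow\psi\), hence \(\CL\vdash\varphi\leftrightarrow(\varphi\wedge\psi)\). Applying (RE) gives \(\CL\vdash\Eff{C}\varphi\leftrightarrow\Eff{C}(\varphi\wedge\psi)\). One would then still need monotonicity of \(\Eff{C}\) over conjunction, which is itself most easily checked semantically, so I will stick with the direct semantic argument.
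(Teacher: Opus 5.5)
Your proposal is correct and matches the paper's proof: you unfold \(\Eff{C}\varphi\) to obtain a witnessing joint action \(\sigma_C\), apply the validity of \(\varphi\rightarrow\psi\) at each outcome state \(o(s,\sigma_C,\sigma_{\overline{C}})\), and conclude that the same \(\sigma_C\) witnesses \(\Eff{C}\psi\). Your observation that validity must be applied at the outcome states, not at \(s\), is exactly the step the paper's short proof relies on.
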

	
	\begin{proof}
		Suppose \(\mathcal{M},s\models\Eff{C}\varphi\).
		Then \(C\) has a joint action such that every resulting state satisfies \(\varphi\).
		Since \(\models\varphi\rightarrow\psi\), every such state also satisfies \(\psi\).
		Therefore \(\mathcal{M},s\models\Eff{C}\psi\).
	\end{proof}
	
	\begin{theorem}[Contravariance of inability]
		\label{thm:inability-contravariance}
		If \(\models \varphi \rightarrow \psi\), then
		\[
		\models \Iab_C\psi \rightarrow \Iab_C\varphi.
		\]
	\end{theorem}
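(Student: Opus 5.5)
The plan mirrors the proof of Theorem~\ref{thm:inability-antimono}: obtain the result as the contrapositive of the corresponding ability principle, here Proposition~\ref{prop:ability-covariance}. Assume \(\models\varphi\rightarrow\psi\). Proposition~\ref{prop:ability-covariance} then gives
\[
\models \Eff{C}\varphi \rightarrow \Eff{C}\psi.
\]
Since validity is truth at every state of every coalition model, and the Boolean clauses of Definition~\ref{def:cl-satisfaction} are classical, propositional contraposition holds pointwise. This yields
\[
\models \neg\Eff{C}\psi \rightarrow \neg\Eff{C}\varphi.
\]

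Finally, Definition~\ref{def:iab-semantics} (equivalently, the validity of (Iab-Def) established in Theorem~\ref{thm:CLIab-complete}) lets us replace \(\neg\Eff{C}\psi\) by \(\Iab_C\psi\) and \(\neg\Eff{C}\varphi\) by \(\Iab_C\varphi\) at every state. This gives exactly \(\models \Iab_C\psi \rightarrow \Iab_C\varphi\).

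The one point needing care is that \(\varphi\) and \(\psi\) range over \(\mathcal{L}_{\Iab}\), whereas Proposition~\ref{prop:ability-covariance} was stated in the context of the coalition modality. Its proof, however, uses only the satisfaction clause for \(\Eff{C}\) and the pointwise validity of \(\varphi\rightarrow\psi\), so it applies verbatim to formulae of the extended language. Alternatively, one could pass through the translation \(t\) using Lemma~\ref{lem:truth-preservation}.

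As a purely syntactic cross-check, completeness (Theorem~\ref{thm:CLIab-complete}) gives \(\CLIab\vdash\varphi\rightarrow\psi\), so \(\CLIab\vdash\varphi\leftrightarrow(\varphi\wedge\psi)\). Applying (RE) would yield \(\Eff{C}\varphi\leftrightarrow\Eff{C}(\varphi\wedge\psi)\), but we would then still need monotonicity \(\Eff{C}(\varphi\wedge\psi)\rightarrow\Eff{C}\psi\), which is not an explicit axiom. I would therefore stay with the semantic argument above, which has no real obstacle.
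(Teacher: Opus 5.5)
Your proposal is correct and matches the paper's proof: contrapose Proposition~\ref{prop:ability-covariance} pointwise to get \(\models \neg\Eff{C}\psi \rightarrow \neg\Eff{C}\varphi\), then rewrite both sides using (Iab-Def). You also check that the covariance argument carries over verbatim to formulae of \(\mathcal{L}_{\Iab}\), a point the paper leaves implicit.
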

	
	\begin{proof}
		By Proposition~\ref{prop:ability-covariance} and propositional contraposition,
		\[
		\models \neg\Eff{C}\psi \rightarrow \neg\Eff{C}\varphi.
		\]
		By (Iab-Def), this is \(\models \Iab_C\psi \rightarrow \Iab_C\varphi\).
	\end{proof}
	
	Contravariance reflects the fact that weaker goals are easier to ensure.
	If \(C\) cannot ensure a weaker goal \(\psi\), then \(C\) cannot ensure any stronger goal \(\varphi\) with \(\models\varphi\rightarrow\psi\).
	The converse direction fails.
	
	\begin{proposition}[Failure of covariance for inability]
		\label{prop:inability-covariance-failure}
		There exist formulae \(\varphi,\psi\) such that \(\models\varphi\rightarrow\psi\), but
		\[
		\not\models \Iab_C\varphi \rightarrow \Iab_C\psi.
		\]
	\end{proposition}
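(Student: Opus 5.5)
The plan is to exhibit one concrete pair of formulae and one small coalition model as a countermodel. I will take \(\varphi = p\wedge q\) and \(\psi = p\), so that \(\models\varphi\rightarrow\psi\) holds trivially. The simplest choice, \(\varphi=\bot\) and \(\psi=\top\), would also work: \(\Iab_C\bot\) is valid by axiom (M), and \(\Iab_C\top\) is never true by axiom (T). This already gives \(\not\models\Iab_C\bot\rightarrow\Iab_C\top\) in every model, and I will present it as the primary witness. It needs only soundness of (T) and (M) (Theorem~\ref{thm:CLIab-complete}), or a direct semantic check that every outcome satisfies \(\top\) and none satisfies \(\bot\).

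Because the degenerate witness may seem uninformative, I will also give a non-degenerate game model in the style of Proposition~\ref{prop:upward-failure}. Let \(N=\{1\}\), \(C=\{1\}\), \(Act_1=\{a,b\}\), \(S=\{s,t,u\}\), \(V(p)=\{t,u\}\) and \(V(q)=\{t\}\). Set \(o(s,a)=u\) and \(o(s,b)=u\).

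With \(\overline{C}=\emptyset\), every action of agent \(1\) leads to \(u\), where \(p\) holds and \(q\) fails. Hence \(\mathcal{M},s\models\Eff{\{1\}}p\), so \(\mathcal{M},s\not\models\Iab_{\{1\}}p\). Also \(\mathcal{M},s\not\models\Eff{\{1\}}(p\wedge q)\), so \(\mathcal{M},s\models\Iab_{\{1\}}(p\wedge q)\). The implication \(\Iab_C\varphi\rightarrow\Iab_C\psi\) therefore fails at \(s\).

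Each step is a routine check of Definition~\ref{def:iab-semantics}. The only point needing care is fixing the empty-complement convention: \(Act_\emptyset\) has a single profile, so the universal quantifier ranges over one trivial choice. Outcomes from states other than \(s\) can be set arbitrarily, for example as self-loops.
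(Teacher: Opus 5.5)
Your proof is correct, but your primary witness takes a different route from the paper. The paper gives a single two-agent countermodel with \(\varphi=p\wedge q\) and \(\psi=p\). There agent~\(1\) has only one action, which guarantees \(p\), while agent~\(2\)'s choice between \(b_1\) and \(b_2\) decides \(q\). So \(\{1\}\) can ensure \(p\) but not \(p\wedge q\).

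Your pair \(\varphi=\bot\), \(\psi=\top\) needs no construction. It reduces the claim to the boundary laws \(\models\Iab_C\bot\) and \(\models\neg\Iab_C\top\) (Proposition~\ref{prop:boundary}, or (M) and (T) with soundness). Hence \(\Iab_C\bot\rightarrow\Iab_C\top\) is false at every state of every model, for every \(C\) and every \(N\). That is more general than the paper's instance at \(C=\{1\}\), and it sidesteps any question of how \(C\) is quantified in the statement. But it only shows that covariance fails at the logical extremes, not for contingent goals, and the latter is what makes the proposition informative.

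Your second model does use contingent goals. However, there \(C=N\) and no action profile reaches a \(p\wedge q\)-state at all. So \(\Iab_{\{1\}}(p\wedge q)\) amounts to systemic impossibility, \(\Eff{\emptyset}\neg(p\wedge q)\) by Theorem~\ref{thm:grand-coalition}. The paper's model shows the genuinely strategic phenomenon: \(p\wedge q\) is reachable via \((a,b_1)\) yet not enforceable by \(C\), because the complementary coalition controls \(q\). If you want your non-degenerate witness to make that point, give \(\overline{C}\) a nontrivial choice, as the paper does.
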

	
	\begin{proof}
		Let \(N=\{1,2\}\), \(C=\{1\}\), \(Act_1=\{a\}\), \(Act_2=\{b_1,b_2\}\), \(S=\{s,t_1,t_2\}\), \(V(p)=\{t_1,t_2\}\), and \(V(q)=\{t_1\}\).
		Define \(o(s,a,b_1)=t_1\) and \(o(s,a,b_2)=t_2\).
		Let \(\varphi=p\wedge q\) and \(\psi=p\).
		Clearly, \(\models (p\wedge q)\rightarrow p\).
		
		At \(s\), coalition \(C=\{1\}\) can ensure \(p\), since its only action \(a\) leads to either \(t_1\) or \(t_2\), both satisfying \(p\).
		Thus \(\mathcal{M},s\not\models \Iab_{\{1\}}p\).
		However, \(C\) cannot ensure \(p\wedge q\), since agent \(2\) may choose \(b_2\), leading to \(t_2\) where \(q\) is false.
		Hence \(\mathcal{M},s\models \Iab_{\{1\}}(p\wedge q)\).
	\end{proof}
	
	\begin{remark}[Direction of propagation]
		\label{rem:direction-of-propagation}
		The direction of propagation is easily mistaken.
		Inability to ensure a strong goal says nothing about inability to ensure a weaker goal.
		The valid principle is the opposite: inability to ensure the weaker goal propagates to inability to ensure the stronger one.
	\end{remark}
	
	\subsection{Interaction with Boolean Connectives}
	
	\subsubsection*{Conjunction}
	
	Inability propagates from conjuncts to conjunctions, but not conversely.
	
	\begin{theorem}[Downward distribution over conjunction]
		\label{thm:conj-downward}
		\[
		\models (\Iab_C\varphi \vee \Iab_C\psi) \rightarrow \Iab_C(\varphi\wedge\psi).
		\]
	\end{theorem}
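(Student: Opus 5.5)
The plan is to reduce the claim to Theorem~\ref{thm:inability-contravariance}, applied once to each disjunct. Two propositional validities do the work:
\[
\models (\varphi\wedge\psi)\rightarrow\varphi
\qquad\text{and}\qquad
\models (\varphi\wedge\psi)\rightarrow\psi .
\]
Contravariance of inability, with the stronger goal \(\varphi\wedge\psi\) and the weaker goals \(\varphi\) and \(\psi\) in turn, then gives
\[
\models \Iab_C\varphi\rightarrow\Iab_C(\varphi\wedge\psi)
\qquad\text{and}\qquad
\models \Iab_C\psi\rightarrow\Iab_C(\varphi\wedge\psi).
\]

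Next I would combine these two implications by the propositional tautology \((A\rightarrow B)\wedge(A'\rightarrow B)\rightarrow((A\vee A')\rightarrow B)\). Concretely, fix a model \(\mathcal{M}\) and state \(s\) with \(\mathcal{M},s\models\Iab_C\varphi\vee\Iab_C\psi\), and split into the two disjunct cases. In each case the corresponding implication above yields \(\mathcal{M},s\models\Iab_C(\varphi\wedge\psi)\).

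As a sanity check, I would also give the direct semantic argument via Definition~\ref{def:iab-semantics}. Suppose \(\Iab_C\varphi\) holds, so every \(\sigma_C\) admits a countering \(\sigma_{\overline{C}}\) whose outcome falsifies \(\varphi\). That same outcome then falsifies \(\varphi\wedge\psi\), so every \(\sigma_C\) is countered for the conjunction as well. The case \(\Iab_C\psi\) is symmetric.

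There is no real obstacle here. The only point needing care is the direction of contravariance, which Remark~\ref{rem:direction-of-propagation} flags: the conjunction is the \emph{stronger} goal, so inability propagates to it from either conjunct, and not the other way round. Since the statement is a validity claim, Theorem~\ref{thm:CLIab-complete} additionally yields derivability in \(\CLIab\).
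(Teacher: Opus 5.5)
Your proposal is correct and follows the paper's proof: you apply Theorem~\ref{thm:inability-contravariance} to the validities $(\varphi\wedge\psi)\rightarrow\varphi$ and $(\varphi\wedge\psi)\rightarrow\psi$, then combine the two resulting implications by propositional reasoning. Your extra direct check via Definition~\ref{def:iab-semantics} is also sound, though not needed.
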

	
	\begin{proof}
		By Theorem~\ref{thm:inability-contravariance} applied to \(\models (\varphi\wedge\psi)\rightarrow\varphi\) and \(\models (\varphi\wedge\psi)\rightarrow\psi\), we obtain
		\[
		\models \Iab_C\varphi \rightarrow \Iab_C(\varphi\wedge\psi)
		\quad\text{and}\quad
		\models \Iab_C\psi \rightarrow \Iab_C(\varphi\wedge\psi).
		\]
		By propositional reasoning, \(\models (\Iab_C\varphi \vee \Iab_C\psi) \rightarrow \Iab_C(\varphi\wedge\psi)\).
	\end{proof}
	
	\begin{corollary}[Absorption]
		\label{cor:absorption}
		\[
		\models \Iab_C\varphi \rightarrow \Iab_C(\varphi\wedge\psi).
		\]
	\end{corollary}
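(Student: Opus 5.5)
This is an immediate consequence of Theorem~\ref{thm:conj-downward}, and the plan is simply to weaken its antecedent. Propositionally, \(\Iab_C\varphi \rightarrow (\Iab_C\varphi \vee \Iab_C\psi)\) is a tautology, hence valid in every coalition model. Chaining it with the valid implication \((\Iab_C\varphi \vee \Iab_C\psi) \rightarrow \Iab_C(\varphi\wedge\psi)\) from Theorem~\ref{thm:conj-downward} yields \(\models \Iab_C\varphi \rightarrow \Iab_C(\varphi\wedge\psi)\). The step is sound because validity is closed under propositional consequence.

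There is also a direct route that avoids the disjunctive form altogether. Since \(\models (\varphi\wedge\psi)\rightarrow\varphi\), Theorem~\ref{thm:inability-contravariance}, taken with goal \(\varphi\wedge\psi\) as the stronger formula and \(\varphi\) as the weaker, gives \(\models \Iab_C\varphi \rightarrow \Iab_C(\varphi\wedge\psi)\) directly. I would present this one-line argument as the main proof and mention the derivation from Theorem~\ref{thm:conj-downward} as an equivalent alternative.

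There is no real obstacle. The only point needing care is the direction of contravariance: the hypothesis \(\varphi\wedge\psi \models \varphi\) must yield inability of the weaker goal \(\varphi\) implying inability of the stronger goal \(\varphi\wedge\psi\), and not the other way round. This is exactly the point stressed in Remark~\ref{rem:direction-of-propagation}.
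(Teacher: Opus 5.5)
Your proposal is correct and matches the paper's own proof: the paper argues exactly as in your alternative route (the left-disjunct case of Theorem~\ref{thm:conj-downward}). Your preferred one-line contravariance argument is just the first displayed step of the paper's proof of that theorem, and you have the direction of contravariance right, so nothing differs in substance.
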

	
	\begin{proof}
		This is the left disjunct case of Theorem~\ref{thm:conj-downward}.
	\end{proof}
	
	\begin{proposition}[Failure of upward distribution over conjunction]
		\label{prop:conj-upward-failure}
		\[
		\not\models \Iab_C(\varphi\wedge\psi) \rightarrow (\Iab_C\varphi \vee \Iab_C\psi).
		\]
	\end{proposition}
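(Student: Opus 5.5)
We need a countermodel: $C$ can ensure $\varphi$ and can ensure $\psi$ separately, but cannot ensure $\varphi\wedge\psi$. Unpacking via (Iab-Def), the formula fails at $s$ exactly when $\mathcal{M},s\models \Iab_C(p\wedge q)$, $\mathcal{M},s\models\Eff{C}p$ and $\mathcal{M},s\models\Eff{C}q$. So the plan is to exhibit a model in which $C$ has two different joint actions, one forcing $p$ and one forcing $q$, while no single joint action forces both. This is the failure of aggregation within a single coalition, which is characteristic of the non-normality noted after Definition~\ref{def:cl-system}.

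Concretely, I would take $N=\{1\}$, $C=\{1\}$, $Act_1=\{a,b\}$, $S=\{s,t,u\}$, $V(p)=\{t\}$, $V(q)=\{u\}$, and set $o(s,a)=t$ and $o(s,b)=u$. Other outcomes are arbitrary, as stipulated at the start of the section.

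The verification has three parts.
\begin{itemize}
\item Action $a$ guarantees $p$, since $\overline{C}=\emptyset$ has only the empty profile. Hence $\mathcal{M},s\models\Eff{C}p$, so $\mathcal{M},s\not\models\Iab_C p$.
\item Symmetrically, action $b$ guarantees $q$, so $\mathcal{M},s\not\models\Iab_C q$.
\item Neither $t$ nor $u$ satisfies $p\wedge q$, so neither action of $C$ guarantees $p\wedge q$. Hence $\mathcal{M},s\models\Iab_C(p\wedge q)$.
\end{itemize}
Taking $\varphi=p$ and $\psi=q$, the implication $\Iab_C(\varphi\wedge\psi)\rightarrow(\Iab_C\varphi\vee\Iab_C\psi)$ therefore fails at $s$.

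There is no real obstacle. The only point of care is to read the statement as asserting the existence of some coalition $C$ and formulae $\varphi,\psi$ that refute the scheme, matching the reading of the earlier failure results such as Proposition~\ref{prop:upward-failure}. If one wants a model with an active opponent, one can instead use $N=\{1,2\}$ with agent $2$ having a single dummy action; the argument is unchanged.
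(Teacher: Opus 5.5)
Your proposal is correct and matches the paper's own proof: it uses the same one-agent model with two actions, one leading to a $p$-only state and one to a $q$-only state, so that $\Eff{C}p$ and $\Eff{C}q$ hold while no action forces $p\wedge q$. The differences are only cosmetic, namely the state names $t,u$ instead of $t_1,t_2$ and your optional remark about adding a dummy second agent.
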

	
	\begin{proof}
		Let \(N=\{1\}\), \(C=\{1\}\), \(Act_1=\{a,b\}\), \(S=\{s,t_1,t_2\}\), \(V(p)=\{t_1\}\), and \(V(q)=\{t_2\}\).
		Define \(o(s,a)=t_1\) and \(o(s,b)=t_2\).
		
		Agent \(1\) can ensure \(p\) by choosing \(a\), and can ensure \(q\) by choosing \(b\).
		Hence \(\mathcal{M},s\not\models \Iab_{\{1\}}p\) and \(\mathcal{M},s\not\models \Iab_{\{1\}}q\).
		However, no action leads to a state satisfying both \(p\) and \(q\).
		Thus \(\mathcal{M},s\models \Iab_{\{1\}}(p\wedge q)\).
	\end{proof}
	
	\subsubsection*{Disjunction}
	
	For disjunction, the valid direction is the reverse: inability to ensure a disjunction entails inability to ensure either disjunct, but inability to ensure each disjunct separately need not entail inability to ensure their disjunction.
	
	\begin{theorem}[Upward distribution over disjunction]
		\label{thm:disj-upward}
		\[
		\models \Iab_C(\varphi\vee\psi) \rightarrow (\Iab_C\varphi \wedge \Iab_C\psi).
		\]
	\end{theorem}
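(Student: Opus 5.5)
The argument mirrors the proof of Theorem~\ref{thm:conj-downward}, using contravariance (Theorem~\ref{thm:inability-contravariance}) applied to the two disjunct-introduction entailments. First, observe that \(\models \varphi\rightarrow(\varphi\vee\psi)\) and \(\models \psi\rightarrow(\varphi\vee\psi)\), since both are propositional tautologies and hence valid in every coalition model at every state. Applying Theorem~\ref{thm:inability-contravariance} to the first, with \(\varphi\) as the stronger goal and \(\varphi\vee\psi\) as the weaker one, gives
\[
\models \Iab_C(\varphi\vee\psi)\rightarrow\Iab_C\varphi .
\]
The same theorem applied to the second gives \(\models \Iab_C(\varphi\vee\psi)\rightarrow\Iab_C\psi\).

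Second, combine these two validities by the propositional pattern \((A\rightarrow B)\wedge(A\rightarrow B')\) entails \(A\rightarrow(B\wedge B')\). This step is sound because validity is closed under propositional consequence: at any state where both implications hold, their conjunctive combination holds. The result is exactly \(\models \Iab_C(\varphi\vee\psi)\rightarrow(\Iab_C\varphi\wedge\Iab_C\psi)\).

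As a cross-check, one could argue directly from Definition~\ref{def:iab-semantics}. If every \(\sigma_C\) has a countering \(\sigma_{\overline{C}}\) whose outcome falsifies \(\varphi\vee\psi\), then that same counter falsifies \(\varphi\) and also \(\psi\), so \(\Iab_C\varphi\) and \(\Iab_C\psi\) both hold. Equivalently, the statement is the contrapositive of \(\models(\Eff{C}\varphi\vee\Eff{C}\psi)\rightarrow\Eff{C}(\varphi\vee\psi)\), which follows from Proposition~\ref{prop:ability-covariance}. There is no real obstacle. The only point needing care is orienting contravariance correctly (see Remark~\ref{rem:direction-of-propagation}): the disjunction is the weaker goal, so inability propagates from it to each disjunct, and not the other way round.
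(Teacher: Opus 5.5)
Your proposal is correct and matches the paper's proof step for step: you apply contravariance (Theorem~\ref{thm:inability-contravariance}) to $\models \varphi\rightarrow(\varphi\vee\psi)$ and $\models \psi\rightarrow(\varphi\vee\psi)$, then combine the two resulting implications propositionally. Your direct semantic cross-check and the contrapositive reading via covariance of ability are also sound, though the proof does not need them.
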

	
	\begin{proof}
		By Theorem~\ref{thm:inability-contravariance} applied to \(\models \varphi\rightarrow(\varphi\vee\psi)\) and \(\models \psi\rightarrow(\varphi\vee\psi)\), we obtain
		\[
		\models \Iab_C(\varphi\vee\psi)\rightarrow\Iab_C\varphi
		\quad\text{and}\quad
		\models \Iab_C(\varphi\vee\psi)\rightarrow\Iab_C\psi.
		\]
		Therefore, \(\models \Iab_C(\varphi\vee\psi) \rightarrow (\Iab_C\varphi \wedge \Iab_C\psi)\).
	\end{proof}
	
	\begin{proposition}[Failure of downward distribution over disjunction]
		\label{prop:disj-downward-failure}
		\[
		\not\models (\Iab_C\varphi \wedge \Iab_C\psi) \rightarrow \Iab_C(\varphi\vee\psi).
		\]
	\end{proposition}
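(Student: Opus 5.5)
The plan is to give a small countermodel in which \(C\) can ensure neither disjunct on its own but can ensure the disjunction. This requires the adversary's choice to decide which disjunct ends up true. As in Proposition~\ref{prop:inability-covariance-failure}, I will take \(N=\{1,2\}\) and \(C=\{1\}\). Agent \(1\) gets a single action \(Act_1=\{a\}\) and agent \(2\) gets two actions \(Act_2=\{b_1,b_2\}\). The states are \(S=\{s,t_1,t_2\}\), with valuation \(V(p)=\{t_1\}\) and \(V(q)=\{t_2\}\). The outcomes at \(s\) are \(o(s,a,b_1)=t_1\) and \(o(s,a,b_2)=t_2\), and outcomes at other states are arbitrary, per the convention of this section. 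I will set \(\varphi=p\) and \(\psi=q\).

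The verification has three steps, each using Definition~\ref{def:iab-semantics}.

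\begin{enumerate}
\item \(\mathcal{M},s\models\Iab_{\{1\}}p\). Agent \(1\)'s only action is \(a\), and agent \(2\) can answer with \(b_2\), which leads to \(t_2\) where \(p\) fails.
\item \(\mathcal{M},s\models\Iab_{\{1\}}q\), symmetrically, because agent \(2\) can answer with \(b_1\), which leads to \(t_1\) where \(q\) fails.
\item \(\mathcal{M},s\not\models\Iab_{\{1\}}(p\vee q)\). The action \(a\) guarantees \(p\vee q\), since both \(t_1\) and \(t_2\) satisfy it. Hence \(\mathcal{M},s\models\Eff{\{1\}}(p\vee q)\), which is the negation of \(\Iab_{\{1\}}(p\vee q)\).
\end{enumerate}

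Together these give \(\mathcal{M},s\models(\Iab_C p\wedge\Iab_C q)\wedge\neg\Iab_C(p\vee q)\), so the implication fails at \(s\).

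There is no real obstacle here. The only care needed is that the quantifier order in Definition~\ref{def:iab-semantics} lets the countering action of \(\overline{C}\) depend on \(\sigma_C\); with a single action for agent \(1\) this is trivial. One could also take \(C=\emptyset\) with one agent choosing between \(t_1\) and \(t_2\). The two-agent version, however, matches the earlier countermodels and makes the point about the opponent deciding which disjunct holds explicit.
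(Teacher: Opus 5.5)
Your proposal is correct and is the paper's proof. It uses the same countermodel (\(N=\{1,2\}\), \(Act_1=\{a\}\), \(Act_2=\{b_1,b_2\}\), \(V(p)=\{t_1\}\), \(V(q)=\{t_2\}\)) and the same verification: agent~\(2\)'s choice defeats each disjunct separately, while agent~\(1\)'s single action \(a\) guarantees \(p\vee q\).
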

	
	\begin{proof}
		Let \(N=\{1,2\}\), \(C=\{1\}\), \(Act_1=\{a\}\), \(Act_2=\{b_1,b_2\}\), \(S=\{s,t_1,t_2\}\), \(V(p)=\{t_1\}\), and \(V(q)=\{t_2\}\).
		Define \(o(s,a,b_1)=t_1\) and \(o(s,a,b_2)=t_2\).
		
		Coalition \(C=\{1\}\) cannot ensure \(p\), since agent \(2\) may choose \(b_2\).
		Likewise, \(C\) cannot ensure \(q\), since agent \(2\) may choose \(b_1\).
		Thus \(\mathcal{M},s\models \Iab_{\{1\}}p \wedge \Iab_{\{1\}}q\).
		However, every possible outcome after \(C\)'s only action satisfies \(p\vee q\).
		Therefore \(\mathcal{M},s\models \Eff{\{1\}}(p\vee q)\), and hence \(\mathcal{M},s\not\models \Iab_{\{1\}}(p\vee q)\).
	\end{proof}
	
	\begin{remark}[Coarse-grained versus fine-grained control]
		\label{rem:coarse-fine-control}
		A coalition may be able to guarantee that some member of a set of desirable outcomes occurs without being able to determine which member occurs.
		This is coarse-grained control.
		It suffices for disjunctive goals, but not for the ability to ensure either disjunct individually.
		This phenomenon is well known in the study of effectivity functions~\cite{Pauly01} and has analogues in the distinction between knowledge \emph{de dicto} and \emph{de re} in epistemic logic~\cite{FHMV95}.
	\end{remark}
	
	\subsubsection*{Implication}
	
	Implication can be treated through its Boolean equivalence with disjunction.
	
	\begin{theorem}[Distribution into implication]
		\label{thm:implication-distribution}
		\[
		\models \Iab_C(\varphi\rightarrow\psi) \rightarrow (\Iab_C\neg\varphi \wedge \Iab_C\psi).
		\]
	\end{theorem}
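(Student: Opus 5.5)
The plan is to reduce the statement to Theorem~\ref{thm:disj-upward} via the Boolean equivalence between \(\varphi\rightarrow\psi\) and \(\neg\varphi\vee\psi\), and then transport inability across this equivalence. Contravariance (Theorem~\ref{thm:inability-contravariance}) is the bridge, applied in the appropriate direction.

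First, since \(\models(\neg\varphi\vee\psi)\rightarrow(\varphi\rightarrow\psi)\) is a propositional tautology, Theorem~\ref{thm:inability-contravariance} gives
\[
\models \Iab_C(\varphi\rightarrow\psi)\rightarrow\Iab_C(\neg\varphi\vee\psi).
\]
Here we take \(\neg\varphi\vee\psi\) as the stronger goal and \(\varphi\rightarrow\psi\) as the weaker goal. Only this direction of the equivalence is needed, so the direction of propagation stressed in Remark~\ref{rem:direction-of-propagation} is respected. Alternatively, if \(\rightarrow\) is officially defined as \(\neg\varphi\vee\psi\) (or via \(\neg(\varphi\wedge\neg\psi)\)), the two formulas coincide or are interchangeable by (RE) together with (Iab-Def), and this step becomes trivial.

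Second, instantiate Theorem~\ref{thm:disj-upward} with \(\neg\varphi\) in place of \(\varphi\):
\[
\models \Iab_C(\neg\varphi\vee\psi)\rightarrow(\Iab_C\neg\varphi\wedge\Iab_C\psi).
\]
Chaining the two implications by propositional reasoning, which preserves validity, yields the claim.

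There is no real obstacle. The only point needing care is to apply contravariance in the correct direction, so that inability passes from the implication to the disjunction. For completeness, one could also give a direct semantic check. Suppose every \(\sigma_C\) admits a counter-response \(\sigma_{\overline{C}}\) whose outcome falsifies \(\varphi\rightarrow\psi\). That outcome satisfies \(\varphi\wedge\neg\psi\), so the same response falsifies both \(\neg\varphi\) and \(\psi\). This witnesses \(\Iab_C\neg\varphi\) and \(\Iab_C\psi\) simultaneously.
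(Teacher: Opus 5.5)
Your proof is correct and follows essentially the paper's route: rewrite \(\varphi\rightarrow\psi\) as \(\neg\varphi\vee\psi\) and apply Theorem~\ref{thm:disj-upward} with \(\neg\varphi\) in place of \(\varphi\). The only difference is that you justify passing across the Boolean equivalence by contravariance (Theorem~\ref{thm:inability-contravariance}) instead of replacement of provable equivalents. That is, if anything, slightly cleaner, because the paper's (RE) rule is officially stated only for \(\Eff{C}\).
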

	
	\begin{proof}
		Since \(\varphi\rightarrow\psi\) is propositionally equivalent to \(\neg\varphi\vee\psi\), by Theorem~\ref{thm:disj-upward},
		\[
		\models \Iab_C(\neg\varphi\vee\psi)
		\rightarrow
		(\Iab_C\neg\varphi \wedge \Iab_C\psi).
		\]
		By replacement of provable equivalents, \(\models \Iab_C(\varphi\rightarrow\psi) \rightarrow (\Iab_C\neg\varphi \wedge \Iab_C\psi)\).
	\end{proof}
	
	\begin{proposition}[Failure of the converse implication principle]
		\label{prop:implication-converse-failure}
		\[
		\not\models (\Iab_C\neg\varphi \wedge \Iab_C\psi) \rightarrow \Iab_C(\varphi\rightarrow\psi).
		\]
	\end{proposition}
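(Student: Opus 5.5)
The plan is to reduce the claim to Proposition~\ref{prop:disj-downward-failure} and reuse its countermodel. Since \(\varphi\rightarrow\psi\) is propositionally equivalent to \(\neg\varphi\vee\psi\), we instantiate the failed principle \((\Iab_C\chi\wedge\Iab_C\psi)\rightarrow\Iab_C(\chi\vee\psi)\) with \(\chi=\neg\varphi\). To do this, we take \(\varphi=\neg p\) and \(\psi=q\). Then \(\neg\varphi\) is \(\neg\neg p\), which has the same truth value as \(p\) at every state. Likewise \(\varphi\rightarrow\psi\) is \(\neg p\rightarrow q\), which is equivalent to \(p\vee q\).

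Concretely, we take the model from the proof of Proposition~\ref{prop:disj-downward-failure}:
\begin{itemize}
\item \(N=\{1,2\}\) and \(C=\{1\}\);
\item \(Act_1=\{a\}\) and \(Act_2=\{b_1,b_2\}\);
\item \(S=\{s,t_1,t_2\}\), with \(V(p)=\{t_1\}\) and \(V(q)=\{t_2\}\);
\item \(o(s,a,b_1)=t_1\) and \(o(s,a,b_2)=t_2\).
\end{itemize}
We then verify three facts at \(s\).

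First, \(\mathcal{M},s\models\Iab_{\{1\}}\neg\neg p\). Agent~\(1\)'s only action is \(a\), and agent~\(2\) can reply with \(b_2\), reaching \(t_2\), where \(\neg\neg p\) is false.

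Second, \(\mathcal{M},s\models\Iab_{\{1\}}q\). Here agent~\(2\) replies with \(b_1\), reaching \(t_1\), where \(q\) is false.

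Third, \(\mathcal{M},s\not\models\Iab_{\{1\}}(\neg p\rightarrow q)\). Both possible outcomes satisfy \(\neg p\rightarrow q\): at \(t_1\) the antecedent \(\neg p\) is false, and at \(t_2\) the consequent \(q\) is true. So the single action \(a\) guarantees the implication, and \(\mathcal{M},s\models\Eff{\{1\}}(\neg p\rightarrow q)\). By Definition~\ref{def:iab-semantics}, the inability claim fails.

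These three facts together falsify the schema at \(s\). This gives the required non-validity.

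The only point needing care is the double negation: the statement is about \(\Iab_C\neg\varphi\), so choosing \(\varphi=\neg p\) introduces \(\neg\neg p\). We handle this by evaluating \(\neg\neg p\) directly in the semantics rather than rewriting it syntactically. Alternatively, one may take \(\varphi=p'\) with \(V(p')=\{t_2\}\), so that \(\neg p'\) holds exactly at \(t_1\); the same three checks then go through.
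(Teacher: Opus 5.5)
Your proof is correct and is essentially the paper's own argument: reuse the countermodel from Proposition~\ref{prop:disj-downward-failure} with \(\varphi=\neg p\) and \(\psi=q\), so that the antecedent holds at \(s\), while \(\neg p\rightarrow q\) (equivalent to \(p\vee q\)) is guaranteed by agent~1's only action. Your semantic check of \(\neg\neg p\) is slightly more careful than the paper, which simply identifies \(\neg\varphi\) with \(p\).
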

	
	\begin{proof}
		Use the model from Proposition~\ref{prop:disj-downward-failure}.
		Instantiate \(\varphi\) as \(\neg p\) and \(\psi\) as \(q\).
		Then \(\neg\varphi\) is \(p\), and \(\varphi\rightarrow\psi\) is \(\neg p\rightarrow q\), which is propositionally equivalent to \(p\vee q\).
		The antecedent \(\Iab_C\neg\varphi \wedge \Iab_C\psi\) holds at \(s\), while the consequent \(\Iab_C(\varphi\rightarrow\psi)\) fails at \(s\).
	\end{proof}
	
	\subsection{Coalition Composition}
	
	Ability is superadditive: disjoint coalitions that can ensure separate goals can jointly ensure their conjunction (axiom (S) of Definition~\ref{def:cl-system}).
	Inability does not satisfy an analogous superadditivity principle.
	The valid general principle is instead a subadditivity law.
	
	\begin{proposition}[Superadditivity of ability]
		\label{prop:superadditivity}
		For disjoint coalitions \(C,D\),
		\[
		\models (\Eff{C}\varphi \wedge \Eff{D}\psi) \rightarrow \Eff{C\cup D}(\varphi\wedge\psi).
		\]
	\end{proposition}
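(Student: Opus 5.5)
I will prove Proposition~\ref{prop:superadditivity} directly from the satisfaction clause in Definition~\ref{def:cl-satisfaction}, by combining the two witnessing joint actions into one. (One could also cite soundness of axiom (S) within Theorem~\ref{thm:cl-complete}, but a direct semantic argument is short and self-contained.)

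Fix a coalition model \(\mathcal{M}\) and a state \(s\) with \(\mathcal{M},s\models\Eff{C}\varphi\wedge\Eff{D}\psi\), where \(C\cap D=\emptyset\). The definition of \(\Eff{C}\varphi\) yields \(\sigma_C\in Act_C\) such that every \(\overline{C}\)-response leads to a \(\varphi\)-state. Similarly, \(\Eff{D}\psi\) yields \(\tau_D\in Act_D\) such that every \(\overline{D}\)-response leads to a \(\psi\)-state. Because \(C\) and \(D\) are disjoint, \(Act_{C\cup D}=Act_C\times Act_D\) up to reindexing of coordinates. Hence \(\rho=(\sigma_C,\tau_D)\) is a well-defined joint action of \(C\cup D\).

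Now let \(\pi\in Act_{\overline{C\cup D}}\) be arbitrary. The complete profile \((\rho,\pi)\) restricts on \(\overline{C}\) to \((\tau_D,\pi)\), using \(D\subseteq\overline{C}\) and \(\overline{C}=D\cup\overline{C\cup D}\). So the outcome \(o(s,\rho,\pi)\) equals \(o(s,\sigma_C,(\tau_D,\pi))\), which satisfies \(\varphi\). Symmetrically, \(\overline{D}=C\cup\overline{C\cup D}\), so the same outcome equals \(o(s,\tau_D,(\sigma_C,\pi))\) and satisfies \(\psi\). Therefore every outcome of \(\rho\) satisfies \(\varphi\wedge\psi\), and \(\mathcal{M},s\models\Eff{C\cup D}(\varphi\wedge\psi)\).

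The only delicate step is the bookkeeping that identifies a single complete action profile viewed from three different splits of \(N\): \(C\) versus \(\overline{C}\), \(D\) versus \(\overline{D}\), and \(C\cup D\) versus its complement. This is precisely where disjointness is used, since overlapping coalitions could prescribe conflicting actions for a shared agent. I will handle it by treating profiles as functions on \(N\) and the notation \(o(s,\sigma_X,\sigma_{\overline{X}})\) as evaluation of \(o\) at their union. The boundary cases \(C=\emptyset\) or \(D=\emptyset\) need no special treatment, since the empty profile contributes nothing.
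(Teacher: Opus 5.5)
Your proof is correct and follows essentially the same route as the paper: use disjointness to merge the two witnessing joint actions into one joint action of \(C\cup D\), then observe that every completion by \(\overline{C\cup D}\) extends both witnesses, so the outcome satisfies \(\varphi\wedge\psi\). Your explicit decomposition of the single complete profile relative to the three splits of \(N\) (via \(\overline{C}=D\cup\overline{C\cup D}\) and \(\overline{D}=C\cup\overline{C\cup D}\)) spells out the step the paper states in one line.
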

	
	\begin{proof}
		Suppose \(C\cap D=\emptyset\), \(C\) has a joint action \(\sigma_C\) ensuring \(\varphi\), and \(D\) has a joint action \(\sigma_D\) ensuring \(\psi\).
		Since \(C\) and \(D\) are disjoint, the two joint actions combine into a joint action \((\sigma_C,\sigma_D)\) of \(C\cup D\).
		For any action of \(\overline{C\cup D}\), the resulting complete action profile extends both \(\sigma_C\) and \(\sigma_D\), so the outcome satisfies both \(\varphi\) and \(\psi\).
	\end{proof}
	
	\begin{theorem}[Failure of superadditivity for inability]
		\label{thm:superadd-failure}
		The following superadditivity principle for inability is not valid in general, even when \(C\cap D=\emptyset\):
		\[
		(\Iab_C\varphi \wedge \Iab_D\psi)
		\rightarrow
		\Iab_{C\cup D}(\varphi\wedge\psi).
		\]
	\end{theorem}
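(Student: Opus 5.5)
To refute the principle I will exhibit a single countermodel. Fix disjoint coalitions \(C\) and \(D\), and a state \(s\), at which \(C\) cannot ensure \(\varphi\) and \(D\) cannot ensure \(\psi\), yet the union \(C\cup D\) can ensure \(\varphi\wedge\psi\). The natural design is a pure coordination game in the spirit of Proposition~\ref{prop:upward-failure}.

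\textbf{The model.} Let \(N=\{1,2\}\), \(C=\{1\}\), \(D=\{2\}\), and \(Act_1=Act_2=\{a,b\}\). Take \(S=\{s,t,u\}\) and \(V(p)=V(q)=\{t\}\). Put \(o(s,a,a)=t\), and send every other profile at \(s\) to \(u\). Instantiate \(\varphi=p\) and \(\psi=q\).

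\textbf{Checking the antecedent.} First I will show \(\mathcal{M},s\models\Iab_{\{1\}}p\), using the equivalent clause in Definition~\ref{def:iab-semantics}: for each action of agent~1 I must find a counter-action by agent~2 yielding a \(\neg p\)-state.
\begin{itemize}
\item If agent~1 plays \(a\), agent~2 plays \(b\) and the outcome is \(u\).
\item If agent~1 plays \(b\), the outcome is \(u\) whatever agent~2 does.
\end{itemize}
By the symmetric argument, \(\mathcal{M},s\models\Iab_{\{2\}}q\), so the antecedent holds at \(s\).

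\textbf{Refuting the consequent.} Since \(C\cup D=N\) and \(\overline{N}=\emptyset\), the profile \((a,a)\) is a joint action of \(N\) that leads to \(t\) against the unique empty action of the complement. As \(t\models p\wedge q\), we get \(\mathcal{M},s\models\Eff{N}(p\wedge q)\), hence \(\mathcal{M},s\not\models\Iab_{C\cup D}(p\wedge q)\). The implication therefore fails at \(s\), which establishes non-validity.

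There is no real obstacle here. The only point needing care is that the quantifier order in the inability clause lets agent~2's counter-action depend on agent~1's choice, and that is exactly what the case check above exploits. I might also note that choosing \(\varphi=\psi\) would already suffice. Finally, I would remark that the valid principle the section anticipates, a subadditivity law, should follow from contraposing axiom (S).
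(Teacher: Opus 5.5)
Your countermodel is correct and is essentially the paper's own: the same two-agent coordination game with $o(s,a,a)=t$ and every other profile sent to $u$. The only difference is that the paper takes $\varphi=\psi=p$, whereas you use a second letter $q$ with $V(q)=V(p)$. One side remark, which does not affect the proof: contraposing (S) directly only yields the disjunctive law $\Iab_{C\cup D}(\varphi\wedge\psi)\rightarrow(\Iab_C\varphi\vee\Iab_D\psi)$, whereas the paper obtains its subadditivity law $\Iab_{C\cup D}\varphi\rightarrow(\Iab_C\varphi\wedge\Iab_D\varphi)$ from anti-monotonicity.
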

	
	\begin{proof}
		Let \(N=\{1,2\}\), \(C=\{1\}\), \(D=\{2\}\), \(Act_1=Act_2=\{a,b\}\), \(S=\{s,t,u\}\), and \(V(p)=\{t\}\).
		Define \(o(s,x,y)=t\) iff \(x=a\) and \(y=a\), and let all other outcomes at \(s\) be \(u\).
		
		Agent \(1\) alone cannot ensure \(p\), since agent \(2\) may choose \(b\).
		Agent \(2\) alone cannot ensure \(p\), since agent \(1\) may choose \(b\).
		Thus \(\mathcal{M},s\models \Iab_{\{1\}}p \wedge \Iab_{\{2\}}p\).
		However, the coalition \(\{1,2\}\) can ensure \(p\) by choosing \((a,a)\).
		Therefore \(\mathcal{M},s\not\models \Iab_{\{1,2\}}p\).
		Taking \(\varphi=\psi=p\), the proposed superadditivity principle for inability is invalid.
	\end{proof}
	
	\begin{theorem}[Subadditivity of inability]
		\label{thm:subadditivity}
		\[
		\models \Iab_{C\cup D}\varphi \rightarrow (\Iab_C\varphi \wedge \Iab_D\varphi).
		\]
	\end{theorem}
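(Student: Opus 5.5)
The plan is to obtain subadditivity as a direct corollary of anti-monotonicity (Theorem~\ref{thm:inability-antimono}), with no new semantic argument. No disjointness assumption on \(C\) and \(D\) is needed, and none appears in the statement. The key observation is that both \(C\) and \(D\) are subcoalitions of \(C\cup D\). So the inability of the union propagates downward to each component separately.

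Concretely, I will first note that \(C\subseteq C\cup D\subseteq N\). Instantiating Theorem~\ref{thm:inability-antimono} with the pair \((C, C\cup D)\) then gives
\[
\models \Iab_{C\cup D}\varphi \rightarrow \Iab_C\varphi.
\]
Next, since \(D\subseteq C\cup D\), the same theorem instantiated with \((D, C\cup D)\) gives
\[
\models \Iab_{C\cup D}\varphi \rightarrow \Iab_D\varphi.
\]
Finally, I will combine these two validities by propositional reasoning, using the tautology \((\alpha\rightarrow\beta)\wedge(\alpha\rightarrow\gamma)\rightarrow(\alpha\rightarrow\beta\wedge\gamma)\). This is sound pointwise at each state of each coalition model, so it yields the claimed validity.

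There is no real obstacle here; the only point to check is that Theorem~\ref{thm:inability-antimono} holds for arbitrary inclusions \(C\subseteq D\). That includes the degenerate cases \(C=\emptyset\) and \(C=C\cup D\), where the implication is trivial. Alternatively, one could unfold (Iab-Def) and argue contrapositively from Proposition~\ref{prop:ability-mono}: if \(C\) (or \(D\)) could ensure \(\varphi\), then so could \(C\cup D\). I would mention this as a remark.

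I would also add a one-line remark contrasting this result with Theorem~\ref{thm:superadd-failure}. The converse-style principle fails, and the same countermodel (both individual agents unable, the grand coalition able) already witnesses that the implication above is not reversible.
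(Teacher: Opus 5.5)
Your proposal is correct and matches the paper's proof: both instantiate anti-monotonicity (Theorem~\ref{thm:inability-antimono}) once with \(C\subseteq C\cup D\) and once with \(D\subseteq C\cup D\), then combine the two validities propositionally. Your side remark is also accurate: the model in Theorem~\ref{thm:superadd-failure} shows that the converse \((\Iab_C\varphi\wedge\Iab_D\varphi)\rightarrow\Iab_{C\cup D}\varphi\) fails.
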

	
	\begin{proof}
		Since \(C\subseteq C\cup D\) and \(D\subseteq C\cup D\), by anti-monotonicity of inability (Theorem~\ref{thm:inability-antimono}),
		\[
		\models \Iab_{C\cup D}\varphi \rightarrow \Iab_C\varphi
		\quad\text{and}\quad
		\models \Iab_{C\cup D}\varphi \rightarrow \Iab_D\varphi.
		\]
		Combining yields \(\models \Iab_{C\cup D}\varphi \rightarrow (\Iab_C\varphi \wedge \Iab_D\varphi)\).
	\end{proof}
	
	\begin{remark}[Cooperation and the dissolution of inability]
		\label{rem:cooperation-dissolves-inability}
		The failure of superadditivity for inability captures a central feature of collective agency: limitations of individual coalitions may be overcome by cooperation~\cite{Wooldridge09}.
		The valid direction is subadditivity: if the union of two coalitions is unable to ensure a goal, then each part is unable to ensure it on its own.
	\end{remark}
	
	\subsection{Excluded Middle and Strategic Choice}
	
	The next principles concern the relation between inability, choice, and non-zero-sum interaction.
	
	\begin{theorem}[Failure of excluded middle for inability]
		\label{thm:excluded-middle-failure}
		\[
		\not\models \Iab_C\varphi \vee \Iab_C\neg\varphi.
		\]
	\end{theorem}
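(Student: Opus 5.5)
To establish the failure of \(\Iab_C\varphi \vee \Iab_C\neg\varphi\), I will exhibit a single countermodel in which the formula is false at a designated state \(s\). By (Iab-Def), the disjunction is equivalent to \(\neg(\Eff{C}\varphi \wedge \Eff{C}\neg\varphi)\). So it suffices to find a model and a state where coalition \(C\) can ensure \(\varphi\) and can also ensure \(\neg\varphi\). These must come from two different joint actions, since a single joint action cannot force both. This is exactly the situation of strategic choice: a coalition that fully controls whether \(\varphi\) holds is unable in neither direction.

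I will reuse the shape of the model from Proposition~\ref{prop:conj-upward-failure}. Take \(N=\{1\}\), \(C=\{1\}\), \(Act_1=\{a,b\}\), \(S=\{s,t_1,t_2\}\), \(V(p)=\{t_1\}\), with \(o(s,a)=t_1\) and \(o(s,b)=t_2\), and let \(\varphi=p\).

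The verification has two steps.
\begin{itemize}
\item Since \(\overline{C}=\emptyset\), the universal quantifier over \(Act_{\overline{C}}\) ranges only over the empty profile. Choosing \(a\) yields \(t_1\models p\), so \(\mathcal{M},s\models\Eff{\{1\}}p\) and hence \(\mathcal{M},s\not\models\Iab_{\{1\}}p\).
\item Choosing \(b\) yields \(t_2\models\neg p\), so \(\mathcal{M},s\models\Eff{\{1\}}\neg p\) and hence \(\mathcal{M},s\not\models\Iab_{\{1\}}\neg p\).
\end{itemize}
Both disjuncts therefore fail at \(s\), and the schema is not valid.

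There is no real obstacle here. The only point needing care is that the quantification over the empty complementary coalition is handled as in Definition~\ref{def:cl-satisfaction}, where \(Act_\emptyset\) contains exactly one profile. If one prefers a nonempty opponent, a second agent with a single action gives the same result.

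I would also add a brief remark on what the statement does \emph{not} deny. Its dual, \(\Iab_C\varphi \vee \Iab_C\neg\varphi\) failing, is compatible with axiom (M) and (S). For \(C=N\) and \(C=\emptyset\), the conjunction \(\Eff{C}\varphi\wedge\Eff{C}\neg\varphi\) is respectively possible and impossible. Indeed, \(\Eff{\emptyset}\varphi\wedge\Eff{\emptyset}\neg\varphi\) would yield \(\Eff{\emptyset}\bot\) by (S), contradicting (M). So excluded middle does hold for \(C=\emptyset\), and the failure is specific to coalitions with genuine choice. This foreshadows the boundary dualities treated later.
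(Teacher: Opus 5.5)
Your proof is correct and is essentially the paper's own argument: you use the same one-agent countermodel ($N=C=\{1\}$, $Act_1=\{a,b\}$, $o(s,a)=t_1$, $o(s,b)=t_2$, $V(p)=\{t_1\}$, $\varphi=p$), in which agent $1$ can ensure both $p$ and $\neg p$. Your side remark is also sound: (S) with $C=D=\emptyset$, together with (RE) and (M), makes $\Iab_\emptyset\varphi\vee\Iab_\emptyset\neg\varphi$ valid, so the theorem can only be read as "for some nonempty $C$", a quantifier the paper leaves implicit.
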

	
	\begin{proof}
		Let \(N=\{1\}\), \(C=\{1\}\), \(Act_1=\{a,b\}\), \(S=\{s,t_1,t_2\}\), and \(V(p)=\{t_1\}\).
		Define \(o(s,a)=t_1\) and \(o(s,b)=t_2\).
		
		Agent \(1\) can ensure \(p\) by choosing \(a\), and can ensure \(\neg p\) by choosing \(b\).
		Hence \(\mathcal{M},s\not\models \Iab_{\{1\}}p\) and \(\mathcal{M},s\not\models \Iab_{\{1\}}\neg p\).
		Therefore \(\mathcal{M},s\not\models \Iab_{\{1\}}p \vee \Iab_{\{1\}}\neg p\).
	\end{proof}
	
	\begin{theorem}[Failure of exclusivity]
		\label{thm:exclusivity-failure}
		\[
		\not\models \Eff{C}\varphi \rightarrow \Iab_{\overline{C}}\varphi.
		\]
	\end{theorem}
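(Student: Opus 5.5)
The claim is that \(\Eff{C}\varphi \rightarrow \Iab_{\overline{C}}\varphi\) is not valid. By (Iab-Def) this formula says \(\Eff{C}\varphi \rightarrow \neg\Eff{\overline{C}}\varphi\). So it suffices to exhibit a coalition model and a state \(s\) at which both \(C\) and its complement \(\overline{C}\) can ensure the same goal \(\varphi\). The intuition is that in a non-zero-sum situation the goal may be one that both sides can secure, so one side's ability does not exclude the other's.

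The simplest choice is a degenerate case. Take \(C=\emptyset\), so \(\overline{C}=N\), and let \(\varphi=\top\). By axiom (T), or directly by the semantics, \(\Eff{\emptyset}\top\) and \(\Eff{N}\top\) both hold at every state. Hence \(\Iab_N\top\) fails everywhere, and the implication is false at every state of every model. This already establishes non-validity.

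Since a degenerate witness may seem unilluminating, I would also give a non-trivial one in the style of the earlier countermodels. Let \(N=\{1,2\}\), \(C=\{1\}\), \(Act_1=\{a,b\}\), \(Act_2=\{c,d\}\), \(S=\{s,t,u\}\) and \(V(p)=\{t\}\). Define \(o(s,a,y)=t\) for all \(y\), \(o(s,x,c)=t\) for all \(x\), and \(o(s,b,d)=u\). Agent \(1\) ensures \(p\) by playing \(a\), whatever agent \(2\) does. Agent \(2\) ensures \(p\) by playing \(c\), whatever agent \(1\) does. Thus \(\mathcal{M},s\models\Eff{\{1\}}p\) and \(\mathcal{M},s\models\Eff{\{2\}}p\), so \(\mathcal{M},s\not\models\Iab_{\{2\}}p\). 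The antecedent holds while the consequent fails, and the implication is falsified at \(s\).

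There is no real obstacle here. The only care needed is to check that the outcome function is total at \(s\) and consistent across the defining clauses: the overlap \((a,c)\) is sent to \(t\) by both clauses. One could also remark that the principle fails even with \(\varphi=\top\), because axiom (T) gives every coalition the trivial ability. This shows that mutual exclusivity of abilities cannot hold in general, and contrasts with the valid reading \(\Eff{C}\varphi\rightarrow\Iab_{\overline{C}}\neg\varphi\), which follows from axiom (S) together with axiom (M).
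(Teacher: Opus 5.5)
Your proposal is correct and follows the paper's own argument. The paper likewise takes $\varphi=\top$ and observes that every coalition can ensure $\top$, so $\Iab_{\overline{C}}\top$ is never true; it does this for arbitrary $C$, so your restriction to $C=\emptyset$ is unnecessary (as your closing remark notes), your extra model with $p$ correctly shows the failure is not an artefact of tautological goals, and your side remark that $\Eff{C}\varphi\rightarrow\Iab_{\overline{C}}\neg\varphi$ is derivable from (S), (M) and (RE) is also right.
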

	
	\begin{proof}
		Take \(\varphi=\top\).
		Every coalition can ensure \(\top\), since \(\top\) holds at every state.
		Thus \(\models \Eff{C}\top\) and \(\models \Eff{\overline{C}}\top\).
		By (Iab-Def), \(\models \neg\Iab_{\overline{C}}\top\).
		Hence \(\Eff{C}\top \rightarrow \Iab_{\overline{C}}\top\) is not valid.
	\end{proof}
	
	\begin{remark}[Strategic freedom]
		\label{rem:strategic-freedom}
		The failure of excluded middle reflects the possibility of genuine strategic control: a coalition may be able to ensure \(\varphi\) and also able to ensure \(\neg\varphi\), by choosing different actions.
		The failure of exclusivity reflects the non-zero-sum character of Coalition Logic: the fact that one coalition can ensure a goal does not imply that the complementary coalition is unable to ensure the same goal.
		This non-zero-sum feature distinguishes Coalition Logic from strictly competitive game models~\cite{OR94}.
	\end{remark}
	
	\subsection{Symmetry and Complementarity}
	
	A tempting but misleading thought is that inability of \(C\) to ensure \(\varphi\) should correspond to inability of \(\overline{C}\) to ensure \(\neg\varphi\), or to ability of \(\overline{C}\) to ensure \(\neg\varphi\).
	Neither correspondence is valid in general.
	
	\begin{theorem}[Failure of symmetry]
		\label{thm:symmetry-failure}
		\[
		\not\models \Iab_C\varphi \leftrightarrow \Iab_{\overline{C}}\neg\varphi.
		\]
	\end{theorem}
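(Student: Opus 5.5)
To refute the biconditional it suffices to show that one direction fails in a single concrete coalition model. The direction I will attack is \(\Iab_C\varphi \rightarrow \Iab_{\overline{C}}\neg\varphi\). A countermodel needs \(C\) unable to ensure \(\varphi\) while \(\overline{C}\) is able to ensure \(\neg\varphi\). If \(\overline{C}\) can force \(\neg\varphi\), then \(C\) cannot force \(\varphi\), so the key requirement is simply that the complement coalition has a forcing action for \(\neg\varphi\).

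I will take \(N=\{1,2\}\), \(C=\{1\}\), \(\overline{C}=\{2\}\) and \(Act_1=Act_2=\{a,b\}\). Let \(S=\{s,t,u\}\) with \(V(p)=\{t\}\). At \(s\), set \(o(s,x,a)=u\) for every \(x\), and \(o(s,x,b)=t\) for every \(x\). Then:
\begin{itemize}
\item Agent \(2\) can ensure \(\neg p\) by playing \(a\), so \(\mathcal{M},s\models\Eff{\{2\}}\neg p\). By (Iab-Def), \(\mathcal{M},s\not\models\Iab_{\{2\}}\neg p\).
\item Whichever action agent \(1\) picks, agent \(2\) can answer with \(a\), producing \(u\), where \(p\) is false. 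Hence \(\mathcal{M},s\models\Iab_{\{1\}}p\).
\end{itemize}
So with \(\varphi=p\), the left side of the biconditional holds at \(s\) and the right side fails, and the biconditional is invalid.

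An even simpler witness is available: \(\varphi=\bot\) with \(C=N\) and \(\overline{C}=\emptyset\). Axiom (M) together with Theorem~\ref{thm:CLIab-complete} gives \(\models\Iab_N\bot\). On the other side, \(\Iab_\emptyset\neg\bot\) is equivalent to \(\neg\Eff{\emptyset}\top\), which is false everywhere by (T). I will present the concrete two-agent model as the main proof, since it is more illuminating, and record the boundary case in a remark.

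For completeness I may also note that the other direction fails. In the single-agent model from Theorem~\ref{thm:excluded-middle-failure}, with \(C=\emptyset\): the empty coalition cannot ensure \(p\), and the grand coalition can ensure \(\neg p\), so \(\Iab_N\neg p\) is false while \(\Iab_\emptyset p\) is true. That model therefore gives a second countermodel to the same direction, not to the converse.

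Refuting the converse \(\Iab_{\overline{C}}\neg\varphi\rightarrow\Iab_C\varphi\) would require \(C\) able to ensure \(\varphi\) while \(\overline{C}\) is unable to ensure \(\neg\varphi\). That is exactly the situation \(\Eff{\{1\}}p\) with agent \(2\) powerless, for example when agent \(1\) dictates the outcome. However, this is not needed for the theorem, since one direction suffices.

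The only real obstacle is bookkeeping: being sure which side is made true and which false, and checking the quantifier alternation in Definition~\ref{def:iab-semantics} against the model. There is no deep step.
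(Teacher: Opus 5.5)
Your proof is correct and uses the same strategy as the paper: a two-agent countermodel in which one side dictates the outcome. The paper lets agent $1=C$ dictate, so $\Iab_{\overline{C}}\neg p$ is true and $\Iab_C p$ false (refuting right-to-left); you let agent $2=\overline{C}$ dictate and refute left-to-right, which is the same phenomenon with $C$ and $\overline{C}$ (and $\varphi$ and $\neg\varphi$) swapped. Your boundary remark is also sound and works for every $C$, not only $C=N$: $\Iab_C\bot$ is valid by (M) and $\Iab_{\overline{C}}\neg\bot$ is unsatisfiable by (T), so that instance of the biconditional fails at every state of every model.
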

	
	\begin{proof}
		Let \(N=\{1,2\}\), \(C=\{1\}\), \(Act_1=\{a,b\}\), \(Act_2=\{c\}\), \(S=\{s,t,u\}\), and \(V(p)=\{t\}\).
		Define \(o(s,a,c)=t\) and \(o(s,b,c)=u\).
		
		Coalition \(C=\{1\}\) can ensure \(p\) by choosing \(a\).
		Therefore \(\mathcal{M},s\not\models \Iab_{\{1\}}p\).
		The complementary coalition \(\overline{C}=\{2\}\) has only one action \(c\), and agent \(1\) may choose \(a\), leading to \(t\) where \(\neg p\) is false.
		Thus \(\{2\}\) cannot ensure \(\neg p\), and so \(\mathcal{M},s\models \Iab_{\{2\}}\neg p\).
		Hence \(\mathcal{M},s\not\models \Iab_{\{1\}}p \leftrightarrow \Iab_{\{2\}}\neg p\).
	\end{proof}
	
	\begin{theorem}[Failure of complementarity]
		\label{thm:complementarity-failure}
		\[
		\not\models \Iab_C\varphi \vee \Iab_{\overline{C}}\varphi.
		\]
	\end{theorem}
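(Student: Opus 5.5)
The invalidity is witnessed by a countermodel in which both $C$ and $\overline{C}$ can ensure the same formula $\varphi$. By (Iab-Def), the disjunction $\Iab_C\varphi \vee \Iab_{\overline{C}}\varphi$ is equivalent to $\neg(\Eff{C}\varphi \wedge \Eff{\overline{C}}\varphi)$. So it suffices to find a model and state where $\Eff{C}\varphi\wedge\Eff{\overline{C}}\varphi$ holds.

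The simplest route reuses the idea from the proof of Theorem~\ref{thm:exclusivity-failure}: take $\varphi=\top$. Every coalition can ensure $\top$, so $\models\Eff{C}\top$ and $\models\Eff{\overline{C}}\top$. Hence $\models\neg\Iab_C\top\wedge\neg\Iab_{\overline{C}}\top$, and the disjunction fails at every state of every model. This works for any choice of $C$, including the boundary cases $C=\emptyset$ and $C=N$.

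A reader might object that a tautological goal is a degenerate witness, so I would also give a concrete model with an atomic goal. Let $N=\{1,2\}$, $C=\{1\}$, $Act_1=Act_2=\{a,b\}$, $S=\{s,t\}$, $V(p)=\{t\}$, and $o(s,x,y)=t$ for all $x,y$. Then agent $1$ ensures $p$ with either action, and so does agent $2$. Hence $\mathcal{M},s\models\Eff{\{1\}}p\wedge\Eff{\{2\}}p$, and therefore $\mathcal{M},s\not\models\Iab_{\{1\}}p\vee\Iab_{\{2\}}p$.

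There is no real obstacle here. The only point needing care is stating the claim as the failure of validity of the schema, witnessed by particular $C$ and $\varphi$, and checking the quantifier clause of Definition~\ref{def:cl-satisfaction} for each coalition in the concrete model. I would close by noting the conceptual reading: in a non-zero-sum setting, complementary coalitions need not be mutually limiting, which connects the result to Remark~\ref{rem:strategic-freedom}.
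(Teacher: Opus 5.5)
Your proposal is correct, and your main argument is the paper's own: take $\varphi=\top$, note that every coalition (in particular both $C$ and $\overline{C}$) can ensure $\top$, and conclude via (Iab-Def) that both disjuncts are false at every state. Your additional countermodel with an atomic goal (every action profile at $s$ leads to $t$, and $p$ holds at $t$) is also correct; it shows that the failure does not depend on using a tautological goal.
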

	
	\begin{proof}
		Take \(\varphi=\top\).
		Every coalition can ensure \(\top\), including both \(C\) and \(\overline{C}\).
		Thus, by (Iab-Def), \(\models \neg\Iab_C\top\) and \(\models \neg\Iab_{\overline{C}}\top\).
		Therefore \(\Iab_C\top \vee \Iab_{\overline{C}}\top\) is not valid.
	\end{proof}
	
	\begin{theorem}[Inability does not imply opponent ability]
		\label{thm:inability-opponent}
		\[
		\not\models \Iab_C\varphi \rightarrow \Eff{\overline{C}}\neg\varphi.
		\]
	\end{theorem}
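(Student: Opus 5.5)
We will establish the invalidity by exhibiting a single countermodel in which the antecedent \(\Iab_C\varphi\) holds and the consequent \(\Eff{\overline{C}}\neg\varphi\) fails at a designated state \(s\).

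The idea is the quantifier order noted after Definition~\ref{def:iab-semantics}. The formula \(\Iab_C\varphi\) has the form \(\forall\sigma_C\,\exists\sigma_{\overline{C}}\). The formula \(\Eff{\overline{C}}\neg\varphi\) has the form \(\exists\sigma_{\overline{C}}\,\forall\sigma_C\). So we want a game in which the opponent's spoiling response must depend on \(C\)'s choice. A matching-pennies structure does exactly this.

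\textbf{Model.} Take \(N=\{1,2\}\), \(C=\{1\}\), \(\overline{C}=\{2\}\), and \(Act_1=Act_2=\{a,b\}\). Let \(S=\{s,t,u\}\) and \(V(p)=\{t\}\). At \(s\), set \(o(s,x,y)=t\) if \(x=y\) and \(o(s,x,y)=u\) otherwise. Let \(\varphi=p\).

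\textbf{Antecedent holds.} We check \(\mathcal{M},s\models\Iab_{\{1\}}p\). If agent \(1\) plays \(a\), agent \(2\) replies \(b\), and the outcome is \(u\not\models p\). If agent \(1\) plays \(b\), agent \(2\) replies \(a\), again giving \(u\). Every action of agent \(1\) is therefore countered, so \(\Iab_{\{1\}}p\) holds by Definition~\ref{def:iab-semantics}.

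\textbf{Consequent fails.} We check \(\mathcal{M},s\not\models\Eff{\{2\}}\neg p\). If agent \(2\) plays \(a\), agent \(1\) can match with \(a\), giving \(t\models p\). If agent \(2\) plays \(b\), agent \(1\) matches with \(b\), again giving \(t\). So no single action of agent \(2\) guarantees \(\neg p\).

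Together these show that \(\Iab_C\varphi\rightarrow\Eff{\overline{C}}\neg\varphi\) fails at \(s\), so the implication is not valid. As with the earlier countermodels, outcomes at states other than \(s\) are irrelevant.

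The only step needing care is choosing a game with no determinacy at \(s\); matching pennies supplies this directly, so no real obstacle is expected. We will add a remark that a single-agent or a constant-response model could not work, since there the best response would not depend on \(C\)'s choice.
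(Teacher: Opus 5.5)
Your proposal is correct and is essentially the paper's own proof: the same two-agent matching-pennies model (actions $\{a,b\}$ in place of $\{H,T\}$), with $V(p)=\{t\}$, and the same two checks — agent $1$ cannot ensure $p$ and agent $2$ cannot ensure $\neg p$. Your closing remark is also sound: when $C=\emptyset$ or $C=N$ the implication is actually valid by the boundary dualities (Theorems~\ref{thm:grand-coalition} and~\ref{thm:empty-coalition}), so a countermodel needs a proper non-empty coalition, and hence at least two agents.
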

	
	\begin{proof}
		Consider a matching-pennies model~\cite{OR94}.
		Let \(N=\{1,2\}\), \(C=\{1\}\), \(Act_1=Act_2=\{H,T\}\), \(S=\{s,t,u\}\), and \(V(p)=\{t\}\).
		Define \(o(s,x,y)=t\) iff \(x=y\), and let \(o(s,x,y)=u\) otherwise.
		
		Agent \(1\) cannot ensure \(p\): whatever agent \(1\) chooses, agent \(2\) can choose the opposite action, producing \(u\) where \(p\) is false.
		Thus \(\mathcal{M},s\models \Iab_{\{1\}}p\).
		However, agent \(2\) cannot ensure \(\neg p\): whatever agent \(2\) chooses, agent \(1\) can choose the same action, producing \(t\) where \(p\) is true.
		Hence \(\mathcal{M},s\not\models \Eff{\{2\}}\neg p\).
	\end{proof}
	
	\begin{remark}[Strategic indeterminacy]
		\label{rem:strategic-indeterminacy}
		Theorem~\ref{thm:inability-opponent} is central for interpreting inability.
		The statement \(\Iab_C\varphi\) says that \(C\) lacks a guaranteed way to ensure \(\varphi\).
		It does not say that \(\overline{C}\) has a guaranteed way to ensure \(\neg\varphi\).
		In simultaneous-move settings, each action of \(C\) may have some counteraction by \(\overline{C}\), without there being a single action of \(\overline{C}\) that defeats all actions of \(C\).
		This phenomenon is familiar from the theory of two-person zero-sum games without a value in pure strategies~\cite{OR94}.
	\end{remark}
	
	\subsection{Boundary Coalitions}
	
	The empty coalition and the grand coalition produce exact dualities.
	These boundary cases are useful for distinguishing coalitional inability from systemic impossibility and contingency.
	
	\begin{theorem}[Grand coalition inability]
		\label{thm:grand-coalition}
		\[
		\models \Iab_N\varphi \leftrightarrow \Eff{\emptyset}\neg\varphi.
		\]
	\end{theorem}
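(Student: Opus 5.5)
The plan is to argue semantically, unfolding both sides at an arbitrary state \(s\) of an arbitrary coalition model \(\mathcal{M}\), and using the degenerate structure of the boundary coalitions. By Definition~\ref{def:iab-semantics}, \(\mathcal{M},s\models\Iab_N\varphi\) iff for every \(\sigma_N\in Act_N\) there is some \(\sigma_{\emptyset}\in Act_{\emptyset}\) with \(\mathcal{M},o(s,\sigma_N,\sigma_\emptyset)\not\models\varphi\). Since \(\overline{N}=\emptyset\) and \(Act_\emptyset\) contains exactly the empty profile, the inner existential quantifier is trivial. Hence \(\Iab_N\varphi\) holds at \(s\) iff every complete action profile \(\sigma\in\prod_{i\in N}Act_i\) yields an outcome \(o(s,\sigma)\) at which \(\varphi\) fails.

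Next I unfold the right-hand side using Definition~\ref{def:cl-satisfaction}. \(\mathcal{M},s\models\Eff{\emptyset}\neg\varphi\) iff there is \(\sigma_\emptyset\in Act_\emptyset\) such that for all \(\sigma_N\in Act_N\) we have \(\mathcal{M},o(s,\sigma_\emptyset,\sigma_N)\models\neg\varphi\). Here the outer existential is trivial, since \(Act_\emptyset\) is a non-empty singleton. So the condition reduces to: for every complete profile \(\sigma\), \(o(s,\sigma)\not\models\varphi\). This is exactly the condition obtained for \(\Iab_N\varphi\), which gives the biconditional at every state of every model.

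The only point needing care is the bookkeeping around \(Act_\emptyset\). I would state explicitly that it is a singleton, and hence non-empty, by the convention in Section~\ref{sec:preliminaries}. I would also note that \(o(s,\sigma_C,\sigma_{\overline{C}})\) with \(C=N\) or \(C=\emptyset\) denotes \(o(s,\sigma)\) for the corresponding complete profile. Non-emptiness is what makes the existential over \(Act_\emptyset\) harmless in the \(\Eff{\emptyset}\) clause.

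As an alternative syntactic route, I would take the contrapositive of axiom (G), which gives \(\neg\Eff{N}\varphi\rightarrow\Eff{\emptyset}\neg\varphi\), that is, \(\Iab_N\varphi\rightarrow\Eff{\emptyset}\neg\varphi\) via (Iab-Def). For the converse, applying (S) to \(\Eff{\emptyset}\neg\varphi\) and \(\Eff{N}\varphi\) (with \(\emptyset\cap N=\emptyset\)) gives \(\Eff{N}(\neg\varphi\wedge\varphi)\). By (RE) this equals \(\Eff{N}\bot\), contradicting (M). Hence \(\Eff{\emptyset}\neg\varphi\rightarrow\neg\Eff{N}\varphi\), and Theorem~\ref{thm:CLIab-complete} then yields validity. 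Either route works; I would present the semantic one and mention the syntactic one as a check.
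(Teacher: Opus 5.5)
Your semantic argument is correct and is essentially the paper's proof. Both arguments unfold $\Iab_N\varphi$ and $\Eff{\emptyset}\neg\varphi$ at $s$ to the same condition, namely that every complete action profile leads to a $\neg\varphi$-state, using that $Act_{\emptyset}$ is the singleton containing the empty profile. Your syntactic check also works: the contrapositive of (G) gives one direction, and (S) with $\emptyset\cap N=\emptyset$, together with (RE) and (M), gives the other. Soundness then yields validity, and this derivation shows the equivalence is in fact a theorem of $\CLIab$.
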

	
	\begin{proof}
		At a state \(s\), \(\mathcal{M},s\models \Iab_N\varphi\) iff \(\mathcal{M},s\not\models \Eff{N}\varphi\).
		Since \(N\) is the grand coalition, its complement is empty.
		Thus \(\Eff{N}\varphi\) holds at \(s\) iff there exists a complete action profile leading to a \(\varphi\)-state.
		Therefore \(\Iab_N\varphi\) holds iff every complete action profile leads to a state satisfying \(\neg\varphi\).
		
		This is exactly the condition for \(\mathcal{M},s\models \Eff{\emptyset}\neg\varphi\), because the empty coalition has only the empty joint action, and \(\Eff{\emptyset}\neg\varphi\) requires that all complete action profiles lead to \(\neg\varphi\)-states.
		Hence \(\models \Iab_N\varphi \leftrightarrow \Eff{\emptyset}\neg\varphi\).
	\end{proof}
	
	\begin{theorem}[Empty coalition inability]
		\label{thm:empty-coalition}
		\[
		\models \Iab_{\emptyset}\varphi \leftrightarrow \Eff{N}\neg\varphi.
		\]
	\end{theorem}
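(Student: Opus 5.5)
The plan mirrors the proof of Theorem~\ref{thm:grand-coalition}, unfolding both sides semantically at an arbitrary state \(s\) of an arbitrary coalition model \(\mathcal{M}\). By Definition~\ref{def:iab-semantics}, \(\mathcal{M},s\models\Iab_{\emptyset}\varphi\) iff \(\mathcal{M},s\not\models\Eff{\emptyset}\varphi\). The first step is to unfold \(\Eff{\emptyset}\varphi\). Since \(Act_{\emptyset}\) contains only the empty action profile and \(\overline{\emptyset}=N\), \(\Eff{\emptyset}\varphi\) holds at \(s\) iff every complete action profile \(\sigma_N\in Act_N\) satisfies \(\mathcal{M},o(s,\sigma_N)\models\varphi\). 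Negating this, \(\Iab_{\emptyset}\varphi\) holds at \(s\) iff some complete profile \(\sigma_N\) yields \(\mathcal{M},o(s,\sigma_N)\models\neg\varphi\).

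The second step is to unfold \(\Eff{N}\neg\varphi\). Here \(\overline{N}=\emptyset\), so the universal quantifier ranges over the single empty profile of the complement and is vacuous in effect. Hence \(\Eff{N}\neg\varphi\) holds at \(s\) iff there is some \(\sigma_N\in Act_N\) with \(\mathcal{M},o(s,\sigma_N)\models\neg\varphi\). This is exactly the condition from the first step, so the biconditional holds at every state of every model, which gives validity.

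The only delicate point, which I would make explicit, is the bookkeeping for empty products. The identification \(o(s,\sigma_{\emptyset},\sigma_N)=o(s,\sigma_N)\) and the fact that \(Act_{\emptyset}\) is a singleton (never empty) are what turn ``for all'' over the empty coalition's options into a trivial quantifier rather than a vacuously true one. These facts are exactly the conventions fixed in Section~\ref{sec:preliminaries}.

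As an alternative syntactic route, I could apply Theorem~\ref{thm:grand-coalition} to \(\neg\varphi\), giving \(\models\Iab_N\neg\varphi\leftrightarrow\Eff{\emptyset}\neg\neg\varphi\). Replacement of equivalents (Proposition~\ref{prop:ability-covariance} in both directions) turns the right side into \(\Eff{\emptyset}\varphi\). Negating both sides and using (Iab-Def) then yields \(\neg\Eff{N}\neg\varphi\leftrightarrow\Iab_{\emptyset}\varphi\), which is the negation of the target. So this route by itself does not finish the proof and would still require the semantic argument above. For that reason the direct semantic unfolding is the proof I would write.
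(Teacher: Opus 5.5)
Your semantic unfolding is correct and is essentially the paper's own proof. You negate the $\Eff{\emptyset}$ clause to get ``some complete profile yields $\neg\varphi$'', then observe that this is exactly $\Eff{N}\neg\varphi$ because the complement of $N$ is empty. Your bookkeeping about $Act_{\emptyset}$ being a non-empty singleton is a useful addition that the paper leaves implicit.

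Your final paragraph, however, contains a slip. Negating $\Iab_N\neg\varphi \leftrightarrow \Eff{\emptyset}\varphi$ gives $\neg\Iab_N\neg\varphi \leftrightarrow \neg\Eff{\emptyset}\varphi$. By (Iab-Def) the left side is $\neg\neg\Eff{N}\neg\varphi$, i.e.\ $\Eff{N}\neg\varphi$, and the right side is $\Iab_{\emptyset}\varphi$. So the syntactic route yields exactly the target, not its negation; you dropped a negation when rewriting $\neg\Iab_N\neg\varphi$.

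Your stated conclusion was also internally inconsistent. Suppose a valid theorem really entailed $\neg\Eff{N}\neg\varphi \leftrightarrow \Iab_{\emptyset}\varphi$. Together with the target this gives $\Eff{N}\neg\varphi \leftrightarrow \neg\Eff{N}\neg\varphi$, so the target would be refuted outright, and no semantic argument could ``still'' be needed to establish it. That tension should have flagged the error.

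Corrected, the route is a perfectly good alternative proof. It derives the empty-coalition duality from Theorem~\ref{thm:grand-coalition} in two lines, using only (Iab-Def) and replacement of equivalents, and it shows that the two boundary dualities are interderivable. The direct unfolding you chose, which the paper also uses, is self-contained and does not depend on the earlier theorem.
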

	
	\begin{proof}
		At a state \(s\), \(\mathcal{M},s\models \Iab_{\emptyset}\varphi\) iff \(\mathcal{M},s\not\models \Eff{\emptyset}\varphi\).
		The formula \(\Eff{\emptyset}\varphi\) holds iff every complete action profile leads to a \(\varphi\)-state.
		Therefore \(\Iab_{\emptyset}\varphi\) holds iff there exists some complete action profile leading to a state satisfying \(\neg\varphi\).
		
		Since the grand coalition \(N\) controls a complete action profile, this is exactly the condition for \(\mathcal{M},s\models \Eff{N}\neg\varphi\).
		Hence \(\models \Iab_{\emptyset}\varphi \leftrightarrow \Eff{N}\neg\varphi\).
	\end{proof}
	
	\begin{remark}[Duality at the boundary]
		\label{rem:boundary-duality}
		The boundary coalitions yield exact dualities:
		\[
		\Iab_N\varphi \leftrightarrow \Eff{\emptyset}\neg\varphi,
		\qquad
		\Iab_{\emptyset}\varphi \leftrightarrow \Eff{N}\neg\varphi.
		\]
		Grand coalition inability corresponds to the absence of any complete action profile achieving \(\varphi\): the goal is systemically unachievable.
		Empty coalition inability corresponds to the existence of some complete action profile leading to \(\neg\varphi\): the goal is contingent rather than settled.
		These dualities are the coalition-logic analogues of the classical modal duality between necessity and possibility~\cite{BRV01}.
	\end{remark}
	
	\subsection{Logical Boundary Cases}
	
	Finally, inability behaves predictably on logical constants.
	
	\begin{proposition}[Logical boundary cases]
		\label{prop:boundary}
		For every coalition \(C\subseteq N\):
		\begin{enumerate}
			\item \(\models \Iab_C\bot\).
			\item \(\models \neg\Iab_C\top\).
		\end{enumerate}
	\end{proposition}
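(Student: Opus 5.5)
Both items reduce, via (Iab-Def), to the two boundary axioms of \(\CL\) in Definition~\ref{def:cl-system}. For item~1, axiom (M) gives \(\CL\vdash\neg\Eff{C}\bot\) for every \(C\subseteq N\). Since \(\CLIab\) contains all axioms of \(\CL\), instantiating (Iab-Def) as \(\Iab_C\bot\leftrightarrow\neg\Eff{C}\bot\) and applying modus ponens yields \(\CLIab\vdash\Iab_C\bot\). Soundness (Theorem~\ref{thm:CLIab-complete}) then gives \(\models\Iab_C\bot\).

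For item~2, axiom (T) gives \(\Eff{C}\top\). (Iab-Def) gives \(\Iab_C\top\leftrightarrow\neg\Eff{C}\top\), so propositional reasoning yields \(\neg\Iab_C\top\leftrightarrow\Eff{C}\top\), and hence \(\CLIab\vdash\neg\Iab_C\top\). Soundness again transfers this to validity.

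To make the proof self-contained at the semantic level, I would also give the direct model-theoretic argument. Fix any \(\mathcal{M},s\) and coalition \(C\). Since every \(Act_i\) is non-empty, both \(Act_C\) and \(Act_{\overline{C}}\) are non-empty; for \(C=\emptyset\) or \(C=N\), the relevant product is the singleton containing the empty profile. Hence for every \(\sigma_C\) there is at least one \(\sigma_{\overline{C}}\), and the resulting outcome state fails \(\bot\). So no \(\sigma_C\) guarantees \(\bot\), which is exactly the universal–existential clause for \(\Iab_C\bot\) in Definition~\ref{def:iab-semantics}. Dually, any \(\sigma_C\) makes every outcome satisfy \(\top\), so \(\Eff{C}\top\) holds and \(\Iab_C\top\) fails.

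The only point requiring care is this non-emptiness of action sets, including for the boundary coalitions. Without it, \(\Eff{C}\bot\) could hold vacuously when \(Act_{\overline{C}}=\emptyset\), or \(\Eff{C}\top\) could fail when \(Act_C=\emptyset\). Definition~\ref{def:coalition-model} rules out both cases by requiring each \(Act_i\) to be non-empty, and the convention for \(Act_\emptyset\) covers the empty coalition.
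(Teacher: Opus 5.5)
Your proposal is correct and follows the paper's argument: the paper also uses (Iab-Def) to reduce both items to the validity of $\neg\Eff{C}\bot$ and $\Eff{C}\top$, and justifies the latter from the non-emptiness of $Act_C$. Your extra derivation from axioms (M) and (T) plus soundness is redundant but harmless, and your remark that item~1 also needs $Act_{\overline{C}}$ to be non-empty spells out a point the paper leaves implicit.
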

	
	\begin{proof}
		For (1), no coalition can ensure contradiction: \(\models \neg\Eff{C}\bot\).
		By (Iab-Def), this is equivalent to \(\models \Iab_C\bot\).
		
		For (2), every coalition can ensure truth.
		Indeed, since each \(Act_C\) is non-empty and \(\top\) holds at every state, any joint action of \(C\) ensures \(\top\).
		Thus \(\models \Eff{C}\top\).
		By (Iab-Def), \(\models \neg\Iab_C\top\).
	\end{proof}
	
	\subsection{Summary}
	
	Table~\ref{tab:structural-laws} summarises the structural laws established in this section.
	
	\begin{table}[H]
		\centering
		\caption{Structural laws of inability: \(\cmark\) = valid, \(\xmark\) = invalid.}
		\label{tab:structural-laws}
		\footnotesize
		\renewcommand{\arraystretch}{1.2}
		\begin{tabular}{@{}lp{5.8cm}c@{}}
			\toprule
			\textbf{Principle} & \textbf{Formula} & \textbf{Status} \\
			\midrule
			\multicolumn{3}{l}{\textit{Coalition properties}} \\
			Anti-monotonicity
			& \(\Iab_D\varphi \rightarrow \Iab_C\varphi\) \quad (\(C\subseteq D\))
			& \(\cmark\) \\
			Upward propagation
			& \(\Iab_C\varphi \rightarrow \Iab_D\varphi\) \quad (\(C\subsetneq D\))
			& \(\xmark\) \\
			Subadditivity
			& \(\Iab_{C\cup D}\varphi \rightarrow (\Iab_C\varphi\wedge\Iab_D\varphi)\)
			& \(\cmark\) \\
			Superadditivity
			& \((\Iab_C\varphi\wedge\Iab_D\psi) \rightarrow \Iab_{C\cup D}(\varphi\wedge\psi)\) \quad (\(C\cap D=\emptyset\))
			& \(\xmark\) \\
			\midrule
			
			\multicolumn{3}{l}{\textit{Goal properties}} \\
			Contravariance
			& \(\Iab_C\psi \rightarrow \Iab_C\varphi\) \quad if \(\models\varphi\rightarrow\psi\)
			& \(\cmark\) \\
			Covariance
			& \(\Iab_C\varphi \rightarrow \Iab_C\psi\) \quad if \(\models\varphi\rightarrow\psi\)
			& \(\xmark\) \\
			Absorption
			& \(\Iab_C\varphi \rightarrow \Iab_C(\varphi\wedge\psi)\)
			& \(\cmark\) \\
			\midrule
			
			\multicolumn{3}{l}{\textit{Boolean connectives}} \\
			Conjunction, downward
			& \((\Iab_C\varphi\vee\Iab_C\psi) \rightarrow \Iab_C(\varphi\wedge\psi)\)
			& \(\cmark\) \\
			Conjunction, upward
			& \(\Iab_C(\varphi\wedge\psi) \rightarrow (\Iab_C\varphi\vee\Iab_C\psi)\)
			& \(\xmark\) \\
			Disjunction, upward
			& \(\Iab_C(\varphi\vee\psi) \rightarrow (\Iab_C\varphi\wedge\Iab_C\psi)\)
			& \(\cmark\) \\
			Disjunction, downward
			& \((\Iab_C\varphi\wedge\Iab_C\psi) \rightarrow \Iab_C(\varphi\vee\psi)\)
			& \(\xmark\) \\
			Implication
			& \(\Iab_C(\varphi\rightarrow\psi) \rightarrow (\Iab_C\neg\varphi\wedge\Iab_C\psi)\)
			& \(\cmark\) \\
			Implication converse
			& \((\Iab_C\neg\varphi\wedge\Iab_C\psi) \rightarrow \Iab_C(\varphi\rightarrow\psi)\)
			& \(\xmark\) \\
			\midrule
			
			\multicolumn{3}{l}{\textit{Strategic properties}} \\
			Excluded middle
			& \(\Iab_C\varphi \vee \Iab_C\neg\varphi\)
			& \(\xmark\) \\
			Exclusivity
			& \(\Eff{C}\varphi \rightarrow \Iab_{\overline{C}}\varphi\)
			& \(\xmark\) \\
			Symmetry
			& \(\Iab_C\varphi \leftrightarrow \Iab_{\overline{C}}\neg\varphi\)
			& \(\xmark\) \\
			Complementarity
			& \(\Iab_C\varphi \vee \Iab_{\overline{C}}\varphi\)
			& \(\xmark\) \\
			Opponent ability
			& \(\Iab_C\varphi \rightarrow \Eff{\overline{C}}\neg\varphi\)
			& \(\xmark\) \\
			\midrule
			
			\multicolumn{3}{l}{\textit{Boundary coalitions and constants}} \\
			Grand coalition
			& \(\Iab_N\varphi \leftrightarrow \Eff{\emptyset}\neg\varphi\)
			& \(\cmark\) \\
			Empty coalition
			& \(\Iab_{\emptyset}\varphi \leftrightarrow \Eff{N}\neg\varphi\)
			& \(\cmark\) \\
			Contradiction
			& \(\Iab_C\bot\)
			& \(\cmark\) \\
			Truth
			& \(\neg\Iab_C\top\)
			& \(\cmark\) \\
			\bottomrule
		\end{tabular}
	\end{table}
	
	The valid principles---anti-monotonicity, contravariance, subadditivity, the asymmetric distribution laws, and the boundary dualities---provide a toolkit for reasoning about strategic limitations.
	The invalid principles---especially the failures of upward propagation, superadditivity, excluded middle, symmetry, complementarity, and opponent ability---show that inability cannot be treated as a simple transfer of power to the complementary coalition.
	Together, these laws constitute a coherent structural profile that justifies treating \(\Iab_C\) as an explicit modality.

	\section{Discussion}
	\label{sec:discussion}
	
	This section discusses the conceptual significance of making inability explicit in Coalition Logic and identifies the scope and limits of the present approach.
	
	\subsection{Definability and Conceptual Significance}
	
	As noted in Section~\ref{sec:introduction}, a natural objection to \(\CLIab\) is that the inability operator adds no expressive power: every formula containing \(\Iab_C\) can be translated back into the original language of Coalition Logic.
	This objection is correct but not decisive.
	Definability does not entail dispensability, as the standard examples of \(\Diamond\) and the existential quantifier illustrate~\cite{BRV01,Enderton01}.
	
	The structural results of Section~\ref{sec:structural-laws} provide concrete evidence for this claim.
	Once inability is treated explicitly, a coherent profile emerges: inability is anti-monotonic with respect to coalition inclusion, contravariant with respect to goal strength, asymmetric in its interaction with conjunction and disjunction, and not equivalent to the opponent's ability to force failure.
	These facts are all expressible in the original language, but without \(\Iab_C\) they appear only as scattered consequences of principles governing ability.
	With \(\Iab_C\), they form a systematic theory of strategic limitation.
	
	The formula \(\neg\Eff{C}\varphi\) states inability indirectly, by denying an ability claim.
	The formula \(\Iab_C\varphi\) states it directly: \(\varphi\) lies outside the enforceable range of coalition \(C\).
	This shift in focal structure makes limitations and constraints available as explicit objects of formal reasoning, while leaving their definitional dependence on the standard ability modality fully transparent.
	
	\subsection{Inability, Ignorance, and Unawareness}
	
	The treatment of inability has a useful methodological precedent in epistemic logic.
	Ignorance and unawareness are often introduced through negative or dependent notions: ignorance as lack of knowledge, unawareness as absence of awareness~\cite{FHMV95,HR14}.
	Yet both have motivated extensive independent study because they capture epistemic conditions that are not always well understood merely as the negation of positive attitudes.
	
	The analogy is best understood at the methodological level, not at the level of formal structure.
	The inability operator studied here expresses lack of ability, not a two-sided condition analogous to ``not knowing whether.''
	The stronger condition
	\[
	\neg\Eff{C}\varphi \wedge \neg\Eff{C}\neg\varphi
	\]
	expresses a stronger form of inability: coalition $C$ can ensure 
	neither $\varphi$ nor its negation, and is thus powerless with 
	respect to the truth value of $\varphi$ entirely. 
	While $\Iab_{C}\varphi$ captures the inability of a coalition to 
	bring about the truth of $\varphi$, this compound condition captures 
	the inability to bring about either its truth or its falsity---what 
	we may call \emph{strategic indeterminacy}. 
	A systematic study of this and other graded forms of inability 
	is left to future work.

	The closer analogy is with the methodological role of unawareness.
	Even when a limiting notion is initially introduced through its relation to a positive operator, its explicit study may reveal enough conceptual structure to justify richer semantic treatments later.
	In the present paper, inability is introduced conservatively within standard Coalition Logic without an independent semantics.
	Nevertheless, the explicit study identifies a stable structural pattern and clarifies where stronger notions may later enter: ability relative to information~\cite{HW03,AJ22}, awareness, resources~\cite{ADL14,ABL17}, institutions, or dynamic change.
	
	Thus the point of $\Iab_{C}$ is to treat inability 
	as a first-class modality, thereby revealing principles 
	that remain obscured when inability is merely negated ability.

	\subsection{Inability as a First-Class Modality}
	
	In multi-agent reasoning, many important specifications are
	naturally negative~\cite{Wooldridge09}.
	One often wants to say not merely that a desirable state is
	reachable, but that an undesirable state is not enforceable
	by some agent or coalition.
	Treating inability as a first-class modality means giving
	this negative dimension its own primitive operator
	$\Iab_C\varphi$, with its own axioms and inference rules,
	rather than encoding it as $\neg\Eff{C}\varphi$ and
	relying entirely on the theory of $\Eff{C}$.
	
	The formula $\Iab_C\varphi$ states that coalition $C$ lacks
	the strategic power to guarantee outcome $\varphi$ in the
	current one-step interaction.
	This should be understood strategically rather than
	deontically: $\Iab_C\varphi$ does not say that $C$ is
	\emph{forbidden} to bring about $\varphi$, nor that
	$\varphi$ is impossible in an absolute sense.
	The distinction between strategic inability and deontic
	prohibition is important: the former concerns what agents
	\emph{can} do, while the latter concerns what they
	\emph{may} do~\cite{BPX01,Horty01}.
	
	The distinction between inability and opponent ability,
	established in Section~\ref{sec:structural-laws}, is
	especially relevant for applications.
	To say that $C$ is unable to ensure $\varphi$ is not to say
	that the complementary coalition can ensure $\neg\varphi$.
	This matters in simultaneous-move or non-zero-sum
	interaction~\cite{OR94}, where neither side may have
	decisive control.
	Inability is a notion of lack of guaranteed control, not a
	transfer of control to an opponent.
	
	As outlined in Section~\ref{sec:introduction}, the language
	of inability matches the form of central claims in several
	domains.
	In AI safety, many requirements concern what a system cannot
	force~\cite{Bostrom14,Amodei16}.
	The formula $\Iab_H\mathsf{harm}$ states that the system
	lacks the power to guarantee a harmful outcome---a direct
	characterisation of the system's enforceable boundary,
	stronger than saying that harm does not currently occur, and
	different from saying that some other coalition can prevent
	harm.
	It should not, however, be read as a complete safety
	guarantee: harm may still occur under some joint action
	profiles.
	
	In protocol verification, security properties often take the
	form of negative capabilities~\cite{CLMR23}: an adversarial
	coalition should be unable to bring about key compromise,
	agent impersonation, or protocol inconsistency.
	In social choice and institutional design, impossibility
	results can be viewed as claims of systemic
	inability~\cite{Arrow51,Sen70,Gibbard73,Satterthwaite75,BCG16}:
	certain outcomes lie beyond the enforceable range of the
	institutional system under the given constraints.
	
	These examples are not meant as full reductions of AI
	safety, protocol analysis, or social choice theory to
	Coalition Logic.
	Rather, they show that treating inability as a first-class
	modality provides a direct way to express bounded agency,
	negative capability, and systemic impossibility---claims
	whose logical structure is more naturally captured by the
	primitive $\Iab_C\varphi$ than by the derived expression
	$\neg\Eff{C}\varphi$.

	\subsection{Scope and Limits}
	
	The present framework studies the basic form of inability available in Coalition Logic.
	It is one-step, outcome-oriented, and conservative over the standard ability modality.
	This gives the operator a clean technical foundation, but it also marks the limits of the present account.
	
	In particular, the logic developed here does not distinguish objective inability from lack of knowledge of ability~\cite{HW03,AA19}, unawareness of available actions, resource-bounded inability~\cite{ADL14,ABL17}, institutional prohibition, or computational infeasibility.
	Nor does it model the temporal evolution of inability, such as becoming able through learning, coordination, or resource acquisition.
	These distinctions require additional semantic structure, such as that provided by temporal extensions like \(\ATL\)~\cite{AHK02} or dynamic logics of games~\cite{GHL22,vDHK07}.
	
	The contribution of the present paper is therefore foundational.
	It shows that even the conservative notion of inability has a nontrivial and systematic logical structure.
	By making inability explicit, we obtain a language better suited to reasoning about limits of agency, negative requirements, and structural impossibilities in multi-agent systems, without departing from the standard semantic foundations of Coalition Logic.

	\section{Conclusion}
	\label{sec:conclusion}
	
	This paper has made inability an explicit object of study in Coalition Logic.
	Starting from the conservative definition
	\[
	\Iab_C\varphi \leftrightarrow \neg\Eff{C}\varphi,
	\]
	we showed that the resulting system \(\CLIab\) is sound, complete, and conservative over \(\CL\).
	The inability operator adds no expressive power, but it makes explicit a modal perspective that is otherwise hidden: the perspective of strategic limitation.
	
	We established the main structural laws of inability.
	Inability is anti-monotonic with respect to coalition inclusion and contravariant with respect to goal strength.
	Its interaction with Boolean connectives is asymmetric: valid distribution principles for conjunction and disjunction hold only in one direction.
	Individual inability does not generally compose into collective inability, while the boundary coalitions yield precise connections with systemic impossibility and strategic contingency.
	Together, these results show that inability has a stable and distinctive logical profile, despite its definability in terms of ability.
	
	The conceptual lesson is that definability does not imply dispensability.
	By shifting attention from what coalitions can enforce to what they cannot, the inability operator organises reasoning about bounded agency and negative requirements in a form that directly matches the structure of many central specifications in AI safety, protocol verification, and social choice.
	
	Future work should investigate richer notions of inability beyond the conservative, one-step setting studied here.
	The most immediate directions include epistemic inability, where lack of guaranteed enforceability is distinguished from lack of knowledge of one's abilities~\cite{HW03,AA19}; resource-bounded inability, where limitations arise from finite budgets rather than from the strategic environment~\cite{ADL14,ABL17}; and dynamic inability, where the temporal evolution of strategic limitations is modelled within frameworks such as \(\ATL\)~\cite{AHK02} or dynamic logics of games~\cite{GHL22,vDHK07}.
	The analogy with awareness and unawareness in epistemic logic suggests that inability, although introduced here as a definitional extension, may motivate independent semantic treatments in such richer settings.
	
	A particularly interesting direction is the comparative study of \emph{strategic impotence}---the condition \(\Iab_C\varphi \land \Iab_C\neg\varphi\), expressing a coalition's complete lack of control over a proposition---and what may be called \emph{absolute inability}: the grand coalition's inability \(\Iab_N\varphi\), expressing systemic impossibility.
	Despite its formal equivalence to \(\Eff{\emptyset}\neg\varphi\) (Theorem~\ref{thm:grand-coalition}), absolute inability as a coalitional modality may play a conceptually significant role in understanding the boundary between coalitional agency and systemic constraint.
	Formal reducibility does not settle conceptual significance, and these derived notions merit further investigation.
	
	The logic of inability is therefore a logic of boundaries.
	It identifies where agency stops, where constraints begin, and which outcomes remain beyond the reach of strategic power.
	For the analysis of multi-agent systems, such negative structure is not peripheral but fundamental.


\begin{thebibliography}{49}
		
		
		\bibitem{Austin56}
		J.~L. Austin.
		\newblock Ifs and cans.
		\newblock \emph{Proceedings of the British Academy}, 42:109--132, 1956.
		
		\bibitem{Kenny75}
		Anthony Kenny.
		\newblock \emph{Will, Freedom and Power}.
		\newblock Blackwell, 1975.
		
		\bibitem{Maier22}
		John Maier.
		\newblock Abilities.
		\newblock In Edward N.~Zalta, editor, \emph{The Stanford Encyclopedia of Philosophy}. Spring 2022 edition, 2022.
		
		\bibitem{Vranas07}
		Peter B.~M. Vranas.
		\newblock I ought, therefore I can.
		\newblock \emph{Philosophical Studies}, 136(2):167--216, 2007.
		
		\bibitem{Bostrom14}
		Nick Bostrom.
		\newblock \emph{Superintelligence: Paths, Dangers, Strategies}.
		\newblock Oxford University Press, 2014.
		
		\bibitem{Amodei16}
		Dario Amodei, Chris Olah, Jacob Steinhardt, Paul Christiano,
		John Schulman, and Dan Man{\'e}.
		\newblock Concrete problems in AI safety.
		\newblock \emph{arXiv preprint arXiv:1606.06565}, 2016.
		
		\bibitem{Ji23}
		Jiaming Ji, Tianyi Qiu, Boyuan Chen, et al.
		\newblock AI alignment: A comprehensive survey.
		\newblock \emph{arXiv preprint arXiv:2310.19852}, 2023.
		
		\bibitem{BH24}
		Yoshua Bengio, Geoffrey Hinton, Andrew Yao, Dawn Song, Pieter Abbeel,
		Trevor Darrell, Yuval Noah Harari, Ya-Qin Zhang, Lan Xue,
		Shai Shalev-Shwartz, Gillian Hadfield, Jeff Clune, Tegan Maharaj,
		Frank Hutter, Atılım G{\"u}ne{\c{s}} Baydin, Sheila McIlraith,
		Qiqi Gao, Ashwin Acharya, and others.
		\newblock Managing extreme AI risks amid rapid progress.
		\newblock \emph{Science}, 384(6698):842--845, 2024.
		
		\bibitem{Gibbard73}
		Allan Gibbard.
		\newblock Manipulation of voting schemes: A general result.
		\newblock \emph{Econometrica}, 41(4):587--601, 1973.
		
		\bibitem{Satterthwaite75}
		Mark Allen Satterthwaite.
		\newblock Strategy-proofness and Arrow's conditions: Existence and correspondence theorems for voting procedures and social welfare functions.
		\newblock \emph{Journal of Economic Theory}, 10(2):187--217, 1975.
		
		
		\bibitem{Pauly02}
		Marc Pauly.
		\newblock A modal logic for coalitional power in games.
		\newblock \emph{Journal of Logic and Computation}, 12(1):149--166, 2002.
		
		\bibitem{Pauly01}
		Marc Pauly.
		\newblock \emph{Logic for Social Software}.
		\newblock PhD thesis, University of Amsterdam, 2001.
		
		\bibitem{AHK02}
		Rajeev Alur, Thomas A. Henzinger, and Orna Kupferman.
		\newblock Alternating-time temporal logic.
		\newblock \emph{Journal of the ACM}, 49(5):672--713, 2002.
		
		\bibitem{Wooldridge09}
		Michael Wooldridge.
		\newblock \emph{An Introduction to MultiAgent Systems}.
		\newblock Wiley, 2nd edition, 2009.
		
		\bibitem{BGJ15}
		Nils Bulling, Valentin Goranko, and Wojciech Jamroga.
		\newblock Logics for reasoning about strategic abilities in multi-agent systems.
		\newblock In Hans van Ditmarsch, Joseph Y. Halpern, Wiebe van der Hoek,
		and Barteld Kooi, editors, \emph{Handbook of Epistemic Logic},
		pages 253--312. College Publications, 2015.
		
		\bibitem{HW03}
		Wiebe van der Hoek and Michael Wooldridge.
		\newblock Cooperation, knowledge, and time: Alternating-time temporal epistemic logic and its applications.
		\newblock \emph{Studia Logica}, 75(1):125--157, 2003.
		
		\bibitem{AA19}
		Thomas {\AA}gotnes and Natasha Alechina.
		\newblock Coalition logic with individual, distributed and common knowledge.
		\newblock \emph{Journal of Logic and Computation}, 29(7):1041--1069, 2019.
		
		\bibitem{AJ22}
		Thomas {\AA}gotnes and Wojciech Jamroga.
		\newblock Group and individual reasoning about knowledge and ability.
		\newblock \emph{Artificial Intelligence}, 310:103752, 2022.
		
		\bibitem{ADL14}
		Natasha Alechina, St{\'e}phane Demri, and Brian Logan.
		\newblock Reasoning about resource-bounded agents.
		\newblock \emph{Journal of Logic and Computation}, 24(3):661--697, 2014.
		
		\bibitem{ABL17}
		Natasha Alechina, Nils Bulling, and Brian Logan.
		\newblock On the boundary of decidability: Decidable model-checking for a fragment of resource agent logic.
		\newblock In \emph{Proceedings of IJCAI 2017}, pages 1494--1500, 2017.
		
		\bibitem{GLP24}
		Valentin Goranko, Munyque Mittelmann, and Giuseppe Perelli.
		\newblock Coalition logic with constraints on actions.
		\newblock In \emph{Proceedings of AAMAS 2024}, pages 701--709, 2024.
		
		\bibitem{LH22}
		Emiliano Lorini and Andreas Herzig.
		\newblock A logic of individual and collective agency with contingent action types.
		\newblock \emph{Artificial Intelligence}, 311:103770, 2022.
		
		\bibitem{MMPV14}
		Fabio Mogavero, Aniello Murano, Giuseppe Perelli, and Moshe Y. Vardi.
		\newblock Reasoning about strategies: On the model-checking problem.
		\newblock \emph{ACM Transactions on Computational Logic}, 15(4):34:1--34:47, 2014.
		
		\bibitem{CLMR23}
		Edoardo Caravagna, Alessio Lomuscio, Aniello Murano, and Giuseppe Perelli.
		\newblock Verification of strategy logic specifications.
		\newblock In \emph{Proceedings of IJCAI 2023}, pages 6575--6583, 2023.
		
		\bibitem{Hintikka62}
		Jaakko Hintikka.
		\newblock \emph{Knowledge and Belief: An Introduction to the Logic of the Two Notions}.
		\newblock Cornell University Press, 1962.
		
		
		\bibitem{HM85}
		Joseph Y. Halpern and Yoram Moses.
		\newblock A guide to the modal logics of knowledge and belief.
		\newblock In \emph{Proceedings of IJCAI 1985}, pages 480--490, 1985.
		
		\bibitem{FH88}
		Ronald Fagin and Joseph Y. Halpern.
		\newblock Belief, awareness, and limited reasoning.
		\newblock \emph{Artificial Intelligence}, 34(1):39--76, 1988.
		
		\bibitem{vdHL04}
		Wiebe van der Hoek and Alessio Lomuscio.
		\newblock A logic for ignorance.
		\newblock \emph{Electronic Notes in Theoretical Computer Science}, 85(2):117--133, 2004.
		
		\bibitem{BFG23}
		Stefano Bonzio, Vincenzo Fano, and Pierluigi Graziani.
		\newblock A logical modeling of severe ignorance.
		\newblock \emph{Journal of Philosophical Logic}, 52(4):1053--1080, 2023.
		
		\bibitem{FH24}
		Hana Frluckaj and Eric Pacuit.
		\newblock The logic of ignorance: A proof-theoretic perspective.
		\newblock \emph{Studia Logica}, 112(2):341--372, 2024.
		
		\bibitem{DLR98}
		Eddie Dekel, Barton L. Lipman, and Aldo Rustichini.
		\newblock Standard state-space models preclude unawareness.
		\newblock \emph{Econometrica}, 66(1):159--173, 1998.
		
		\bibitem{Schipper15}
		Burkhard C. Schipper.
		\newblock Awareness.
		\newblock In Hans van Ditmarsch, Joseph Y. Halpern, Wiebe van der Hoek,
		and Barteld Kooi, editors, \emph{Handbook of Epistemic Logic},
		pages 147--203. College Publications, 2015.
		
		\bibitem{DFP24}
		Franz Dietrich, Christian List, and Marcus Pivato.
		\newblock Awareness logic: A Kripke-based rendition.
		\newblock \emph{Journal of Philosophical Logic}, 53:1--35, 2024.
		
		\bibitem{HR14}
		Joseph Y. Halpern and Leandro C. R\^{e}go.
		\newblock Reasoning about knowledge of unawareness.
		\newblock \emph{Games and Economic Behavior}, 88:100--120, 2014.
		
		\bibitem{JLN23}
		Wojciech Jamroga, Damian Le{\'s}kiewicz, and Artur Niewiadomski.
		\newblock Model checking strategic ability under imperfect information is undecidable.
		\newblock \emph{Journal of Artificial Intelligence Research}, 76:1--35, 2023.
		
		\bibitem{GHL22}
		Valentin Goranko, Antti Kuusisto, and Raine R{\"o}nnholm.
		\newblock Game-theoretic semantics for ATL+ with applications to model checking.
		\newblock In \emph{Proceedings of AAMAS 2022}, pages 559--567, 2022.
		
		
		\bibitem{GJ04}
		Valentin Goranko and Wojciech Jamroga.
		\newblock Comparing semantics of logics for multi-agent systems.
		\newblock \emph{Synthese}, 139(2):241--280, 2004.
		
		\bibitem{Chellas80}
		Brian F. Chellas.
		\newblock \emph{Modal Logic: An Introduction}.
		\newblock Cambridge University Press, 1980.
		
		\bibitem{BRV01}
		Patrick Blackburn, Maarten de Rijke, and Yde Venema.
		\newblock \emph{Modal Logic}.
		\newblock Cambridge University Press, 2001.
		
		\bibitem{Enderton01}
		Herbert B. Enderton.
		\newblock \emph{A Mathematical Introduction to Logic}.
		\newblock Academic Press, 2nd edition, 2001.
		
		
		\bibitem{OR94}
		Martin J. Osborne and Ariel Rubinstein.
		\newblock \emph{A Course in Game Theory}.
		\newblock MIT Press, 1994.
		
		\bibitem{BPX01}
		Nuel Belnap, Michael Perloff, and Ming Xu.
		\newblock \emph{Facing the Future: Agents and Choices in Our Indeterminist World}.
		\newblock Oxford University Press, 2001.
		
		\bibitem{Horty01}
		John F. Horty.
		\newblock \emph{Agency and Deontic Logic}.
		\newblock Oxford University Press, 2001.
		
		\bibitem{BH15}
		Jan Broersen and Andreas Herzig.
		\newblock Using STIT theory to talk about strategies.
		\newblock In Johan van Benthem, Sujata Ghosh, and Rineke Verbrugge, editors,
		\emph{Models of Strategic Reasoning}, pages 137--173. Springer, 2015.
		
		
		\bibitem{FHMV95}
		Ronald Fagin, Joseph Y. Halpern, Yoram Moses, and Moshe Y. Vardi.
		\newblock \emph{Reasoning About Knowledge}.
		\newblock MIT Press, 1995.
		
		
		\bibitem{Arrow51}
		Kenneth J. Arrow.
		\newblock \emph{Social Choice and Individual Values}.
		\newblock Wiley, 1951.
		
		\bibitem{Sen70}
		Amartya Sen.
		\newblock \emph{Collective Choice and Social Welfare}.
		\newblock Holden-Day, 1970.
		
		\bibitem{BCG16}
		Felix Brandt, Vincent Conitzer, Ulle Endriss, J{\'e}r{\^o}me Lang,
		and Ariel D. Procaccia, editors.
		\newblock \emph{Handbook of Computational Social Choice}.
		\newblock Cambridge University Press, 2016.
		
		\bibitem{vDHK07}
		Hans van Ditmarsch, Wiebe van der Hoek, and Barteld Kooi.
		\newblock \emph{Dynamic Epistemic Logic}.
		\newblock Cambridge University Press, 2007.
		
	\end{thebibliography}
\end{document}